
\documentclass[letterpaper, 12pt, conference]{ieeeconf}  

\IEEEoverridecommandlockouts                              
\overrideIEEEmargins




\usepackage{graphicx} 
\usepackage{mathptmx} 
\usepackage{times} 
\usepackage{amsmath} 
\usepackage{amssymb}  
\newtheorem{lem}{Lemma}
\newtheorem{thm}{Theorem}

\def\ad{{\rm ad}}           
\def\<{\leqslant}           
\def\>{\geqslant}           

\def\d{\partial}

\def\Re{{\rm Re}}   
\def\Im{{\rm Im}}   

\def\cH{{\cal H}}   
\def\mA{{\mathbb A}}    
\def\mR{{\mathbb R}}    
\def\mC{{\mathbb C}}    

\def\Tr{{\rm Tr}}       
\def\rT{{\rm T}}        


\def\bE{{\bf E}}    


\def\[[[{[\![\![}
\def\]]]{]\!]\!]}

\def\bra{{\langle}}
\def\ket{{\rangle}}

\def\Bra{\big\langle}
\def\Ket{\big\rangle}


\def\re{{\rm e}}        
\def\rd{{\rm d}}        


\def\cL{{\mathcal L}}


\def\x{\times}
\def\ox{\otimes}
\def\od{\odot}

\def\om{{\ominus}}

\def\cF{{\cal F}}

\def\cK{{\cal K}}

\def\cE{{\cal E}}

\def\mS{{\mathbb S}}

\def\ups{\upsilon}
\def\Ups{\Upsilon}

\def\diag{\mathop{\rm diag}}    


\title{\Large \bf
Risk-sensitive Dissipativity of Linear Quantum Stochastic Systems under Lur'e Type
Perturbations of Hamiltonians
}


\author{Igor G. Vladimirov, \qquad Ian R. Petersen
\thanks{This work is supported by the Australian Research Council. The authors are
with the School of Engineering and Information Technology, University of New South Wales at the Australian Defence Force Academy, Canberra, ACT 2600, Australia. E-mail: {\tt igor.g.vladimirov@gmail.com, i.r.petersen@gmail.com}.
}
}

\onecolumn
\pagestyle{plain}
\begin{document}
\maketitle
\thispagestyle{empty}




\begin{abstract}
This paper is concerned with
a stochastic dissipativity theory using quadratic-exponential storage functions for open quantum systems with canonically commuting dynamic variables
governed by quantum stochastic differential equations. The system is linearly coupled to external boson fields and has a quadratic Hamiltonian which is perturbed by nonquadratic functions of linear combinations of system variables.  Such perturbations are similar to those in the classical Lur'e systems  and make the quantum dynamics nonlinear. We study their effect
on the quantum expectation of the exponential of a positive definite  quadratic form of the system variables.
This allows conditions to be established
for the risk-sensitive stochastic storage function of the quantum system  to remain bounded, thus securing boundedness for the moments of system variables of arbitrary order.
These results employ a noncommutative analogue of the Doleans-Dade exponential
and a multivariate partial differential version of the Gronwall-Bellman lemma.

\end{abstract}


\section{Introduction}\label{sec:intro}

Quantum systems, originated in quantum physics and its applications, including, for example,
the  interaction of coherent light with matter at an atomic level studied in quantum optics \cite{GZ_2004,WM_2008}, can be described briefly as noncommutative stochastic systems and are treated by quantum probability tools.
 Their dynamic variables are represented in terms of an algebra of (generally, noncommuting) operators on a Hilbert space, with self-adjoint operators (usually referred to as observables) corresponding to real-valued physical quantities. The role of a probability measure  is played, though in a noncommutative fashion, by a density operator \cite{
 M_1998} on the underlying Hilbert space which specifies the quantum  state of  the system.
An efficient language to describe open quantum systems, interacting with the environment, is provided by quantum stochastic differential equations (QSDEs) \cite{HP_1984,P_1992} (see also a review paper \cite{H_1991} and references therein) which govern the system variables in the Heisenberg picture of quantum dynamics. The QSDEs are driven by a quantum noise from the surrounding memoryless heat bath, represented by a boson Fock space \cite{P_1992}, and employ a system-bath interaction model in combination with the Hamiltonian which would specify the ``internal'' dynamics of the system in isolation from the external fields. This unified Markovian approach to open quantum systems and their interconnections is widely used in quantum control  \cite{B_1983,DDJW_2006,EB_2005,J_2005,JG_2010,JNP_2008} (see \cite{ZJ_2012} for a more complete bibliography)  which is aimed at synthesizing such systems and  enhancing their performance (including stability, optimality, robustness) for practical applications, such as quantum metrology and  quantum-optical communication, to mention a few \cite{NC_2000}.
An important class of open quantum systems is formed by those where  the system variables satisfy canonical commutation relations (CCRs) (similarly to the quantum-mechanical position and momentum operators  \cite{M_1998}), the system Hamiltonian is quadratic and the system-bath coupling operators are linear with respect to the variables. Such systems are governed by linear QSDEs and are dynamically equivalent to open quantum harmonic oscillators which are basic models in linear quantum stochastic control \cite{EB_2005,NJP_2009,P_2010,VP_2011a}. Despite their theoretic  convenience (including closed-form solutions and  preservation of Gaussian nature of the system density operator \cite{JK_1998,P_2010}), the linear quantum dynamics remain a simplified model, which may be distorted by an unmodelled energetics of the real system and its environment. The resulting perturbed  behaviour can be ``localized'' in terms of bounds for the second moments of system variables, which play the role of storage functions to quantify robust  mean square  stability \cite{PUJ_2012} of the system in the quantum stochastic version  \cite{JG_2010} of the classical dissipativity theory \cite{W_1972}.
In the present paper, we develop this line of research further by extending risk-sensitive storage functions \cite{MP_2009}  from classical stochastic systems to the quantum setting. The Hamiltonian of the quantum systems being considered has a nominal quadratic part which is perturbed by nonquadratic functions of linear combinations of the system variables.  Such perturbations are similar to those in the classical Lur'e systems  \cite{L_1951} and lead to nonlinear QSDEs  which neither lend themselves to closed-form solution nor maintain Gaussian quantum statistics. We study the dynamics  
of the quantum expectation of the exponential of a positive definite quadratic form of the system variables. For this purpose, we use a noncommutative analogue of the Doleans-Dade exponential \cite{D_1970} based on a commutator approach to parameter differentiation of exponential operators \cite{L_1968,M_1954,S_1964,W_1967} to take advantage of the CCRs between the system variables. In combination with a multivariate partial differential version of the Gronwall-Bellman lemma, this allows conditions to be established for the perturbations of the Hamiltonian under which the risk-sensitive storage function of the quantum system  remains bounded, thus securing boundedness of moments of the system variables of arbitrary order. The methods, developed  in this paper, differ from those in \cite{DDJW_2006,J_2005,YB_2009} and can be of interest for the design of coherent (measurement-free)   risk-sensitive quantum control schemes.

\section{Underlying quantum stochastic systems}\label{sec:class}

We consider an open quantum system with $n$ dynamic variables $X_1(t),\ldots, X_n(t)$, which evolve in time $t\> 0$ and are assembled into a vector $X(t):= (X_j(t))_{1\< j\< n}$ (vectors are  organised as columns unless indicated otherwise).   The system variables $X_1(0),\ldots, X_n(0)$ at the initial moment of time  are self-adjoint operators on a complex separable Hilbert space $\cH$ (such as the quantum-mechanical position and momentum operators) which are assumed to satisfy CCRs
\begin{equation}
\label{Theta}
    [X, X^{\rT}]
    :=
    (
        [X_j,X_k]
    )_{1\< j,k\< n}
    =
    XX^{\rT}- (XX^{\rT})^{\rT}
    =
    i
    \Theta.
\end{equation}
Here, $[A, B]:= AB-BA$ is the commutator of operators, and the transpose $(\cdot)^{\rT}$ applies to matrices with operator-valued entries as if the latter were scalars.  Also, $i:= \sqrt{-1}$ is the imaginary unit, and $\Theta:= (\theta_{jk})_{1\<j, k\< n}$ is a real antisymmetric matrix of order $n$ (the space of such matrices is denoted by $\mA_n$).
By the standard convention, a linear  operator $\xi$ on $\cH$ is lifted to its ampliation $\xi\ox I_{\cF}$ on the tensor  product space $\cH \ox \cF$, where $I_{\cF}$ denotes the identity operator on a boson Fock space $\cF$  \cite{P_1992}. The latter provides a domain for the action of an
$m$-dimensional  quantum Wiener process $W(t):= (W_k(t))_{1\< k \< m}$ adapted to the filtration associated with the continuous tensor product structure of the Fock space. 
The entries $W_1(t), \ldots, W_m(t)$ of the vector $W(t)$ are self-adjoint operators on $\cF$, which are associated with the annihilation and creation operator  processes of external boson fields.
Omitting the time argument, suppose the quantum Ito table of $W$ is given by
\begin{equation}
\label{Omega}
    \rd W\rd W^{\rT}
    :=
    (\rd W_j \rd W_k)_{1\< j,k\< m}
    =
    \Omega \rd t,
\end{equation}
where $\Omega:= (\omega_{jk})_{1\< j,k\< m}$ is a constant complex positive semi-definite Hermitian matrix of order $m$.
Its  entrywise real part
\begin{equation}
\label{V}
    V
    :=
    \Re \Omega
    =
    (\Omega + \Omega^{\rT})/2
\end{equation}
is a positive semi-definite symmetric matrix since $\Omega = \Omega^* \succcurlyeq 0$, with $(\cdot)^*:= (\overline{(\cdot)})^{\rT}$ the complex conjugate transpose.
The imaginary part of the quantum Ito matrix  $\Omega$  in (\ref{Omega}) specifies the CCRs between the fields as
\begin{equation}
\label{CCRW}
    [\rd W, \rd W^{\rT}]
      :=
    ([\rd W_j, \rd W_k] )_{1\< j,k\< m}\\
     =  i J\rd t,
    \quad
    J:= 2 \Im \Omega,
\end{equation}
where  $J\in \mA_m$ in view of the Hermiticity of $\Omega$.  The quantum Wiener process $W$ represents a quantum noise which drives a QSDE
\begin{equation}
\label{XQSDEgen}
    \rd X =
    F\rd t + G\rd W,
\end{equation}
governing the Heisenberg dynamics of the vector $X$ of system variables. The $n$-dimensional
drift vector $F(t)$ and the dispersion $(n\x m)$-matrix $G(t)$ of this QSDE are expressed  as
\begin{equation}
\label{FG}
    F
    :=
    i[H,X] +\cL(X),
    \qquad
        G
    :=
    - i[X,h^{\rT}]
\end{equation}
in terms of a system Hamiltonian $H$, the quantum Ito matrix  $\Omega$ of the process $W$ and system-field coupling operators $h_1, \ldots, h_m$ which are assembled into a vector $h:= (h_k)_{1\< k \< m}$. Here, $\cL$ denotes the Lindblad superoperator \cite{
L_1976} whose action on the vector $X$ is given by
\begin{equation}
\label{cL}
    \cL(X)
    :=
    \frac{1}{2}
    \sum_{j,k=1}^m
    \omega_{jk}
    \big(
        h_j[X,h_k] + [h_j,X]h_k
    \big).
\end{equation}
Regardless of a particular form of $H$ and $h$ (which usually are functions of system variables),  the CCR matrix $\Theta$ in (\ref{Theta}) is preserved in time due to the Heisenberg unitary evolution
\begin{equation}
\label{zetat}
    \zeta(t) =
    U(t)^{\dagger} \zeta(0) U(t)
\end{equation}
of observables $\zeta$  on the system-field composite Hilbert space  $\cH\ox \cF$ (in particular, the lifted system variables $X_k(0)\ox I_{\cF}$). Here, $U(t)$ is a unitary operator on $\cH\ox \cF$ with initial condition  $U(0)= I_{\cH \ox \cF}$, and $(\cdot)^{\dagger}$ is the operator adjoint. The quantum expectation of the observable (\ref{zetat}) is defined as
\begin{equation}
\label{bE}
    \bE\zeta(t)
    :=
    \Tr(\rho(0) \zeta(t) ),
\end{equation}
where the initial system-bath density operator $\rho(0)$ is assumed to be the tensor product
\begin{equation}
\label{rho0}
    \rho(0)
    :=
    \varpi(0)
    \ox
    \ups
\end{equation}
of the initial plant state $\varpi(0)$ on $\cH$ and the vacuum state $\ups := |0\ket\bra 0|$ of the external fields associated with the vacuum vector $|0\ket$ in $\cF$, where the Dirac bra-ket notation \cite{M_1998} is used. In the next section, we will specify the energetics of the system being considered.

%
%
%
%
%


\section{Hamiltonians with Lur'e type perturbations}\label{sec:lur'e}

For what follows, the system-field coupling operators $h_1, \ldots, h_m$  in (\ref{FG}) are assumed to be linear  with respect to   the system variables:
\begin{equation}
\label{hM}
    h:= MX,
\end{equation}
where $M\in \mR^{m\x n}$ is a constant matrix. That is, $h_j$ is a linear combination of $X_1, \ldots, X_n$ whose coefficients form the $j$th row of $M$. Then, in view of the CCRs (\ref{Theta}) and the bilinearity of the commutator,  the dispersion matrix $G$ is a constant real matrix:
\begin{equation}
\label{BM}
    G
    =
    -i[X, X^{\rT}]M^{\rT}
    =
    \Theta M^{\rT}
    =: B.
\end{equation}
Also, suppose the system Hamiltonian $H$ in (\ref{FG}) is described by
\begin{equation}
\label{H}
    H
    := H_0 + \sum_{k=1}^s\varphi_k(Y_k),
    \qquad
    Y_k:= c_k^{\rT} X,
\end{equation}
where
\begin{equation}
\label{H0}
    H_0
    :=
    X^{\rT} R X/2
\end{equation}
is a \emph{nominal} Hamiltonian, specified by a real symmetric matrix $R$ of order $n$ (the space of such matrices is denoted by $\mS_n$). Here,
 $\varphi_1, \ldots, \varphi_s: \mR\to \mR$ are continuously differentiable functions which are applied to self-adjoint operators $Y_1, \ldots, Y_s$, assembled into a vector
 \begin{equation}
 \label{YC}
    Y := (Y_k)_{1\< k\< s}=C^{\rT} X,
    \qquad
    C:=
    \begin{bmatrix}
        c_1 & \ldots & c_s
    \end{bmatrix},
 \end{equation}
 where the columns of the $(n\x s)$-matrix $C$ are the vectors $c_1, \ldots, c_s\in \mR^n$ from (\ref{H}).
In combination with the linear system-field coupling (\ref{hM}), the quadratic Hamiltonian $H_0$ would lead to a linear quantum system,  dynamically equivalent to the open quantum harmonic oscillator \cite{EB_2005,GZ_2004} which is a basic model in linear quantum control \cite{JNP_2008,NJP_2009,P_2010}.
Therefore,  the functions $\varphi_1, \ldots, \varphi_s$, which, in general, are  not quadratic,  can be  interpreted as an unmodelled part of the Hamiltonian (\ref{H}) playing the role of a perturbation to the nominal quadratic  Hamiltonian $H_0$. The following lemma specializes the QSDE (\ref{XQSDEgen}) for the system under consideration.

\begin{lem}
\label{lem:dX}
For the open quantum system with the CCRs (\ref{Theta}), linear system-field coupling operators (\ref{hM}) and Hamiltonian (\ref{H}), the vector $X$ of system  variables satisfies the QSDE
\begin{equation}
\label{dX}
    \rd X = F\rd t + B \rd W.
\end{equation}
Here,
\begin{equation}
\label{FAB}
    F = A X + \Theta CZ,
    \qquad
    A := \Theta R + BJM/2,
\end{equation}
where the matrices $J$ and $B$ are defined by (\ref{CCRW}) and (\ref{BM}), and
\begin{equation}
\label{Z}
    Z
    :=
    (Z_k)_{1\< k \< s},
    \qquad
    Z_k := \varphi'_k(Y_k),
\end{equation}
with the derivatives  $\varphi_k'$ of the functions $\varphi_k$ being evaluated at the operators   $Y_k$.
\end{lem}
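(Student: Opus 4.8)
The plan is to evaluate the drift $F = i[H,X] + \cL(X)$ and the dispersion $G = -i[X,h^{\rT}]$ from (\ref{FG}) directly under the structural assumptions (\ref{hM})--(\ref{H0}) and match the outcome against (\ref{dX})--(\ref{Z}). The dispersion is already settled: substituting $h = MX$ into $G = -i[X,h^{\rT}]$ and using the CCRs (\ref{Theta}) gives $G = \Theta M^{\rT} = B$, exactly as recorded in (\ref{BM}). It remains to assemble $F$, which I would do by splitting $H = H_0 + \sum_{k=1}^s \varphi_k(Y_k)$ and treating the three contributions $i[H_0,X]$, $i[\sum_k\varphi_k(Y_k),X]$ and $\cL(X)$ separately; the first and third will produce the linear term $AX$, while the middle one produces $\Theta CZ$.

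For the quadratic part I would expand $[X^{\rT}RX, X_l] = \sum_{j,k}R_{jk}[X_jX_k, X_l]$, apply the Leibniz rule for commutators together with $[X_j,X_l] = i\theta_{jl}$, and collect terms using the symmetry of $R$ and the antisymmetry of $\Theta$. A routine index computation yields $[X^{\rT}RX, X] = -2i\Theta RX$, hence $i[H_0,X] = \Theta R X$, which accounts for the $\Theta R$ summand of $A$ in (\ref{FAB}).

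The key step is the perturbation term, and this is where the Lur'e structure pays off. Because each $Y_k = c_k^{\rT}X$ is linear in the system variables, the commutator $[Y_k, X_l] = i(c_k^{\rT}\Theta)_l$ is a c-number, hence central. I would invoke the elementary identity $[\varphi_k(Y_k), X_l] = \varphi_k'(Y_k)\,[Y_k, X_l]$, valid whenever $[Y_k,X_l]$ is central (provable on monomials by the Leibniz rule and extended using continuous differentiability of $\varphi_k$). This gives $[\varphi_k(Y_k), X_l] = iZ_k (c_k^{\rT}\Theta)_l$ with $Z_k$ as in (\ref{Z}); multiplying by $i$ and passing to a column vector over $l$ produces $\Theta c_k Z_k$, and summing over $k$ yields $i[\sum_k\varphi_k(Y_k),X] = \Theta CZ$. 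Thus, although each $\varphi_k$ is nonlinear, its commutator with $X$ is exactly $Z_k$ times a constant vector, so the drift remains affine in the auxiliary variables $Z$.

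Finally, for the Lindblad term I would again exploit the linearity $h = MX$: each $[X_l, h_k] = iB_{lk}$ is a c-number, so the operator factors in (\ref{cL}) can be pulled out of the commutators. Substituting these scalars into $\cL(X_l) = \frac{1}{2}\sum_{j,k}\omega_{jk}(h_j[X_l,h_k] + [h_j,X_l]h_k)$ and simplifying with $\Omega^{\rT} - \Omega = -iJ$ (which follows from the Hermiticity of $\Omega$ and $J = 2\Im\Omega$ in (\ref{CCRW})) collapses the double sum to $\cL(X) = \frac{1}{2}BJMX$, contributing the $BJM/2$ summand of $A$. Adding the three pieces gives $F = (\Theta R + BJM/2)X + \Theta CZ = AX + \Theta CZ$, completing the identification. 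The only delicate point is justifying the central-commutator derivative identity for general $C^1$ functions rather than polynomials, and I expect this to be the main obstacle; it can be handled either by approximation or by observing that $[\,\cdot\,,X_l]$ acts on functions of $Y_k$ as the central derivation $i(c_k^{\rT}\Theta)_l\,\partial_{Y_k}$.
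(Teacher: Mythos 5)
Your proposal is correct and follows essentially the same route as the paper: decompose $H = H_0 + \sum_{k}\varphi_k(Y_k)$, obtain $i[H_0,X]=\Theta RX$ and $\cL(X)=BJMX/2$ for the linear part, and derive the perturbation term from the central-commutator identity $[\varphi_k(Y_k),X]=\varphi_k'(Y_k)[Y_k,X]$ together with $[Y_k,X]=-i\Theta c_k$. The only difference is one of detail: you carry out the index computations for the quadratic Hamiltonian and the Lindblad term explicitly (and flag the polynomial-to-$C^1$ extension of the derivative identity), whereas the paper cites these as well-known and invokes standard commutator identities.
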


\begin{proof}
Being a particular form of (\ref{XQSDEgen}),  the QSDE (\ref{dX}) is obtained by substituting (\ref{hM})--(\ref{H}) into (\ref{FG})--(\ref{cL}). 
The linear part $AX$ of the drift vector $F$ in (\ref{FAB}) comes from the quadratic Hamiltonian $H_0$ in (\ref{H0}) and the linear system-field coupling operators (\ref{hM}) as
\begin{equation}
\label{H0cL}
    i[H_0, X] = \Theta R X,
    \qquad
    \cL(X) = BJMX/2,
\end{equation}
whose derivation is well-known in one form or another  \cite{EB_2005}.
 The term $\Theta CZ$ in (\ref{FAB}) originates from the relationship
$    i[\varphi(Y_k), X]
    =
    i\varphi'(Y_k)[Y_k, X]
    =
    \Theta c_k Z_k
$,
which follows from the commutator identities \cite[pp. 38--39]{M_1998}, the notation (\ref{Z}) and the CCR
$
    [Y_k, X] = -[X,Y_k]
    = - [X,X^{\rT}] c_k = -i\Theta c_k
$
between $Y_k$ and $X$ in view of (\ref{Theta}).  \end{proof}

Therefore, the effect of quadratic perturbations described by
\begin{equation}
\label{quadphi}
    \varphi_k(y)
    = \gamma_k y^2/2,
\end{equation}
with constant coefficients $\gamma_1, \ldots, \gamma_s \in \mR$, is equivalent to modifying the nominal quadratic Hamiltonian $H_0$ in (\ref{H}) as $R\mapsto R+C\diag_{1\< k\< s}(\gamma_k) C^{\rT}$, in which case, the drift vector $F$ in (\ref{FAB}) inherits the  linear dependence on the system variables from (\ref{H0cL}). If the functions $\varphi_1, \ldots, \varphi_s$ are not quadratic,  $F$ becomes nonlinear with respect to  $X$. Such perturbations are similar to those in the classical Lur'e systems \cite{L_1951}. In Sections~\ref{sec:RSM}--\ref{sec:super}, we will investigate the influence of the Lur'e type perturbations on the quadratic-exponential moments of the system variables. This study will be based on the more general results of the next section on exponential moments of adapted quantum processes.

\section{Exponential moments of quantum processes}\label{sec:exp_mom}

Let $\xi$ be an adapted  quantum process on the system-field product space $\cH \ox \cF$  satisfying a QSDE
\begin{equation}
\label{dxi}
  \rd \xi
  =
  f\rd t + g^{\rT}\rd W,
\end{equation}
driven by the quantum Wiener process
$W$ with the Ito matrix $\Omega$ from (\ref{Omega}), where the drift
 $f$ and the dispersion vector $g := (g_j)_{1\< j\< m}$ are adapted quantum processes. We assume that $f(t)$, $g_1(t), \ldots, g_m(t)$ are self-adjoint operators on $\cH\ox \cF$ for any $t\> 0$, and so is $\xi(t)$.
 Such a QSDE can be obtained from (\ref{XQSDEgen}), or its specialization (\ref{dX})--(\ref{Z}), if $\xi$ is a function (for example, a polynomial with real coefficients) of the system variables. In this  case, bounds for the moments $\bE (\xi^r)$ of $\xi$, computed for positive integers $r$ over the vacuum state of the external fields in the sense of (\ref{bE}),  (\ref{rho0}),  would guarantee a statistically  ``localized'' behaviour, that is, robust stability,  of the quantum system \cite{JG_2010,PUJ_2012}. However, more subtle information on the system dynamics is provided by the \emph{exponential moment}
 \begin{equation}
\label{Xi}
    \Xi(t)
    :=
    \bE \re^{\xi(t)}
    =
    \Tr (\rho(0)\re^{\xi(t)}).
\end{equation}
Here, $\xi$ may be multiplied by a real, which, for simplicity, is not done  in the present study since the required effect can be achieved by an appropriate scaling of the processes $f$ and $g$ in (\ref{dxi}).  The following Lemma \ref{lem:EE} employs the ideas of \cite{L_1968}, \cite{M_1954}, \cite[Appendix B]{S_1964}, \cite{W_1967} on parameter differentiation of exponential operators to develop a representation for the quantum Ito differential of the exponential
$     \re^{\xi}
    =
    \sum_{r\>0}\xi^r/r!
$;
see also \cite[pp. 480--482]{K_1976}, \cite[pp. 200--206]{RS_1975} 
on the
exponentials of unbounded self-adjoint operators.
A  straightforward computation of $\rd \re^{\xi}$ is complicated by the noncommutativity between $\xi$ and $\rd \xi$.
To formulate the lemma, let $\cE_{\lambda}$ denote a linear superoperator,  associated
with $\xi(t)$ (at an arbitrary moment of time $t\> 0$) and acting on linear operators $\eta$ on $\cH\ox \cF$ as
\begin{equation}
\label{Eee}
    \cE_{\lambda}(\eta)
    :=
    \re^{-\lambda \xi}\eta\re^{\lambda \xi}
    =
    \sum_{r\> 0}
    \frac{(-\lambda)^r}{r!}
    \ad_{\xi}^r(\eta)
    =
    \re^{-\lambda \ad_{\xi}}(\eta).
\end{equation}
Here, $\lambda$ is a real parameter, and  $\ad_{\xi}(\eta):= [\xi,\eta]=-\d_{\lambda}\cE_{\lambda}(\eta)\big|_{\lambda = 0}$ is the negative of the infinitesimal generator of the superoperators $\cE_{\lambda}$ with the group property $\cE_{\lambda}\circ \cE_{\mu} = \cE_{\lambda + \mu}$ and identity element $\cE_0$. A particular case of the group property, $\cE_{\lambda}\circ \cE_{-1/2} = \cE_{\lambda -1/2}$, implies that
\begin{align}
\label{swap1}
    \re^{\xi/2}\cE_{\lambda}(\eta)
    &= \cE_{\lambda-1/2}(\eta) \re^{\xi/2},\\
\label{swap2}
    \re^{\xi/2}\cE_{\lambda}(\eta)\cE_{\mu}(\zeta)
    &= \cE_{\lambda-1/2}(\eta) \cE_{\mu-1/2}(\zeta) \re^{\xi/2}.
\end{align}
The definition (\ref{Eee}) also shows that  the superoperator $\cE_{\lambda}$ carries out a similarity transformation and is, therefore,  homomorphic:
\begin{equation}
\label{homo}
    \cE_{\lambda}(\eta\zeta)
    =
    \cE_{\lambda}(\eta)
    \cE_{\lambda}(\zeta).
\end{equation}

\begin{lem}
\label{lem:EE}
The quantum Ito differential of the  exponential of the adapted quantum  process $\xi$ in (\ref{dxi}) is representable as
\begin{equation}
\label{dexi}
    \rd \re^{\xi}
     =
     \re^{\xi/2}
     (\alpha \rd t + \beta^{\rT} \rd W)
     \re^{\xi/2}.
\end{equation}
Here, 
\begin{align}
\label{alpha}
    \alpha
    & :=
    \int_{\!\!-1/2}^{1/2}
    \Big(
        \cE_{\lambda}(f)
        +
        \cE_{\lambda}(g)^{\rT} \Omega \int_{-1/2}^{\lambda} \cE_{\mu}(g)\rd \mu
    \Big)
    \rd \lambda,\\
\label{beta}
    \beta
    & :=
    (\beta_j)_{1\< j \< m}
    :=
    \int_{\!\!-1/2}^{1/2}
    \cE_{\lambda}(g)
    \rd \lambda
\end{align}
are adapted  quantum processes, self-adjoint at any time,  and the superoperator $\cE_{\lambda}$ from  (\ref{Eee}) applies to the vector $g$ entrywise.
\end{lem}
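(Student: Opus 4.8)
The plan is to treat the quantum Ito increment $\rd\xi = f\rd t + g^{\rT}\rd W$ as a perturbation of $\xi(t)$ and to expand $\re^{\xi+\rd\xi}$ by a Duhamel (Dyson) series, retaining every contribution up to order $\rd t$ according to the quantum Ito table $\rd W\rd W^{\rT} = \Omega\rd t$ from (\ref{Omega}). Writing $A := \xi$ and $B := \rd\xi$ and iterating the identity $\re^{A+B} = \re^{A} + \int_0^1 \re^{(1-u)A}B\,\re^{u(A+B)}\rd u$ once gives
\[
    \re^{\xi+\rd\xi}-\re^{\xi}
    =
    \int_0^1 \re^{(1-u)\xi}(\rd\xi)\re^{u\xi}\,\rd u
    +
    \int_0^1\!\!\int_0^s \re^{(1-s)\xi}(\rd\xi)\re^{(s-p)\xi}(\rd\xi)\re^{p\xi}\,\rd p\,\rd s
    + \ldots,
\]
where the inner exponential $\re^{p(\xi+\rd\xi)}$ in the second term has been replaced by $\re^{p\xi}$ at the expense of a third-order remainder. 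Since each $\rd\xi$ carries a single factor of $\rd W$ or $\rd t$, any product of three or more increments vanishes under the Ito rules ($\rd t\,\rd W = 0$, $\rd W_j\rd W_k\rd W_l = 0$), so only the two displayed terms survive to order $\rd t$.

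The first-order term I would handle by the symmetric splitting $\re^{(1-u)\xi}(\cdot)\re^{u\xi} = \re^{\xi/2}\cE_{u-1/2}(\cdot)\re^{\xi/2}$, which is immediate from the definition (\ref{Eee}) of $\cE_\lambda$. The change of variable $\lambda := u-1/2$ maps $u\in[0,1]$ onto $\lambda\in[-1/2,1/2]$ and produces $\re^{\xi/2}\big(\int_{-1/2}^{1/2}\cE_\lambda(\rd\xi)\,\rd\lambda\big)\re^{\xi/2}$. Because the forward increments $\rd W$ act on the future factor of the Fock space and therefore commute with the adapted operator $\xi(t)$, hence with $\re^{\pm\lambda\xi}$, one has $\cE_\lambda(g^{\rT}\rd W) = \cE_\lambda(g)^{\rT}\rd W$. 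This term thus reproduces the $\cE_\lambda(f)$ part of $\alpha$ in (\ref{alpha}) and the entire $\beta^{\rT}\rd W$ contribution with $\beta$ from (\ref{beta}).

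For the second-order term only the $\rd W\rd W^{\rT}$ piece is of order $\rd t$. Writing $\rd\xi = \sum_j g_j\rd W_j$ and again commuting the forward increments past all adapted factors, I would collect the two increments into $\rd W_j\rd W_k = \omega_{jk}\rd t$, apply the same splitting in the form $\re^{(1-s)\xi}g_j\re^{(s-p)\xi}g_k\re^{p\xi} = \re^{\xi/2}\cE_{s-1/2}(g_j)\cE_{p-1/2}(g_k)\re^{\xi/2}$, and substitute $\lambda := s-1/2$, $\mu := p-1/2$. The simplex $0\<p\<s\<1$ becomes $-1/2\<\mu\<\lambda\<1/2$, so the double integral collapses to $\sum_{j,k}\omega_{jk}\cE_\lambda(g_j)\int_{-1/2}^\lambda\cE_\mu(g_k)\,\rd\mu$, i.e. exactly $\cE_\lambda(g)^{\rT}\Omega\int_{-1/2}^\lambda\cE_\mu(g)\,\rd\mu$ integrated over $\lambda$, which is the quadratic part of $\alpha$. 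Dividing by $\rd t$ and combining with the first-order term yields (\ref{dexi})--(\ref{beta}), and the ordering of the two factors $\cE_\lambda(g_j)$, $\cE_\mu(g_k)$ is inherited from the ordering in the Dyson term.

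I expect the main obstacle to be analytic rather than combinatorial: $\xi(t)$ is in general an unbounded self-adjoint operator, so $\re^{\xi}$ must be understood through the spectral calculus and the Duhamel/Dyson manipulations need domain and convergence control, which I would justify along the lines of the cited treatments of exponentials of unbounded self-adjoint operators. The two structural facts that make the bookkeeping close — that forward increments commute with adapted operators, and that products of three or more increments vanish — are the points to state with care; granted these, the similarity-transformation and group identities (\ref{swap1})--(\ref{homo}) reduce the whole computation to the two changes of variables above.
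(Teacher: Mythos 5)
Your derivation of the representation (\ref{dexi})--(\ref{beta}) is correct, and it arrives at the same key intermediate formula as the paper --- the symmetrized integral
$\re^{\xi/2}\int_{-1/2}^{1/2}\cE_{\lambda}(\rd \xi)\big(1+\int_{-1/2}^{\lambda}\cE_{\mu}(\rd \xi)\rd \mu\big)\rd \lambda\, \re^{\xi/2}$
--- by a genuinely different route. The paper never invokes the Duhamel/Dyson expansion: it applies the quantum Ito product rule repeatedly to each power $\xi^r$ in the series $\re^{\xi}=\sum_{r\geq 0}\xi^r/r!$, obtaining the sums $\sum_{j,k}\xi^j(\rd \xi)\xi^k/(j+k+1)!$ and $\sum_{j,k,\ell}\xi^j(\rd \xi)\xi^k(\rd \xi)\xi^{\ell}/(j+k+\ell+2)!$, and then resums these into the first and second Gateaux-derivative integrals by means of the Euler Beta identities (\ref{int1}), (\ref{int2}). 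Your approach buys brevity: the iterated Duhamel formula plus the Ito rules ($\rd t\,\rd W=0$, products of three or more increments vanish) produce the two surviving terms in one stroke. What the paper's route buys is self-containedness: the primitive operation of quantum stochastic calculus is the Ito product rule, so computing $\rd(\xi^r)$ term by term requires no appeal to an operator-perturbation identity or to the heuristic $\rd\re^{\xi}=\re^{\xi+\rd\xi}-\re^{\xi}$; the two are formally equivalent (the noncommutative binomial expansion of $(\xi+\rd\xi)^r$ is exactly the iterated Ito rule), but the paper makes that bookkeeping explicit. From the intermediate formula onward --- the swap identities (\ref{swap1}), (\ref{swap2}), the translation of the integration limits, $\cE_{\lambda}(\rd \xi)=\cE_{\lambda}(f)\rd t+\cE_{\lambda}(g)^{\rT}\rd W$, and the collapse of $\cE_{\lambda}(\rd \xi)\cE_{\mu}(\rd \xi)$ to $\cE_{\lambda}(g)^{\rT}\Omega\cE_{\mu}(g)\rd t$ --- your steps and the paper's coincide.

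One piece of the statement is left unproved in your proposal: the lemma also asserts that $\alpha$ and $\beta_1,\ldots,\beta_m$ are self-adjoint at every time, and this requires an argument. The paper obtains it from the identity $(\cE_{\lambda}(\eta))^{\dagger}=\cE_{-\lambda}(\eta^{\dagger})$ in (\ref{cEsym}) (a consequence of $\xi^{\dagger}=\xi$) together with the symmetry of the integration domains: for $\beta$, the substitution $\lambda\mapsto-\lambda$ on $[-1/2,1/2]$ gives $\beta^{\dagger}=\beta$; for the Ito-correction part of $\alpha$, one needs $(\cE_{\lambda}(g)^{\rT}\Omega\cE_{\mu}(g))^{\dagger}=\cE_{-\mu}(g)^{\rT}\Omega\cE_{-\lambda}(g)$ (using $\Omega=\Omega^*$ and the self-adjointness of the entries of $g$) and the fact that the simplex $\Delta=\{(\lambda,\mu):\,1/2\geq\lambda\geq\mu\geq-1/2\}$ is invariant under the reflection $(\lambda,\mu)\mapsto(-\mu,-\lambda)$. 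This is precisely where the symmetric splitting about $\re^{\xi/2}$ earns its keep: with the asymmetric limits $[0,1]$ and $[0,\lambda]$ the integrands would not enjoy this reflection symmetry, and the self-adjointness of the coefficients --- which is what makes the rate process $\alpha$ a legitimate observable to be averaged in Theorem~\ref{th:EEE} --- would not follow. Add this verification; the rest of your argument stands.
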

\begin{proof}
For any positive integer $r$,
repeated application of the  quantum Ito rule $\rd (\eta\zeta) = (\rd \eta) \zeta + \eta \rd \zeta + (\rd \eta) \rd \zeta $ to the $r$th power of $\xi$ yields
\begin{align*}
    \rd (\xi^r)
    = &
    \sum_{j=0}^{r-1}
    \xi^j (\rd \xi) \xi^{r-1-j}\\
    & +
    \sum_{j,k\>0:\, j+k \< r-2}
    \xi^j
    (\rd \xi)
    \xi^k
    (\rd \xi)
    \xi^{r-2-j-k},
\end{align*}
where the second sum represents the Ito correction term. Hence,
\begin{equation}
\label{sum12}
    \rd \re^{\xi}
    =
    \sum_{j,k\>0}
    \frac{\xi^j (\rd \xi) \xi^k}{(j+k+1)!}
     +
    \sum_{j,k,\ell\>0}
    \frac{\xi^j
    (\rd \xi)
    \xi^k
    (\rd \xi)
    \xi^{\ell}}{(j+k+\ell+2)!}.
\end{equation}
We will now use the following identities for Euler's multivariate  Beta functions
\begin{align}
\label{int1}
    \int_0^1
    (1-\lambda)^j
    \lambda^k
    \rd
    \lambda
    & =
    j!k!/(j+k+1)!,\\
\label{int2}
    \int_{\Lambda}
    (1-\lambda)^j
    (\lambda-\mu)^k\mu^{\ell}
    \rd \lambda
    \rd \mu
    & =
    j!k!\ell! /(j+k+\ell + 2)!,
\end{align}
which hold  for all nonnegative integers $j$, $k$, $\ell$, with the integration in (\ref{int2}) being carried out over a planar simplex
\begin{equation}
\label{simplex}
    \Lambda
    :=
    \{(\lambda,  \mu)\in \mR^2:\ 1\> \lambda\> \mu \> 0\}.
\end{equation}
In view of (\ref{int1}), the first sum in (\ref{sum12}) takes the form
\begin{align}
\nonumber
    \sum_{j,k\>0}&
    \frac{\xi^j (\rd \xi) \xi^k}{(j+k+1)!}
     =
    \sum_{j,k\>0}
    \int_{0}^1
    \frac{((1-\lambda)\xi)^j (\rd \xi) (\lambda \xi)^k}{j!k!}
    \rd \lambda\\
\label{sum1}
    &=
    \int_{0}^{1}
    \re^{(1-\lambda)\xi}
    (\rd \xi)
    \re^{\lambda \xi}
    \rd \lambda
    =
    \re^{\xi}
    \int_{0}^{1}
    \cE_{\lambda}
    (\rd \xi)
    \rd \lambda,
\end{align}
where (\ref{Eee}) is used. The right-hand side of (\ref{sum1}) is recognizable as the Gateaux derivative of the exponential  $\re^{\xi}$ \cite[Eq. (10)]{L_1968},  \cite[Eqs. (B5), (B6)]{S_1964}, \cite[Eqs. (2.1), (4.1)]{W_1967}, which,  in our context, is evaluated in the direction of the Ito differential $\rd \xi$.
Similarly, substitution of (\ref{int2}) into the second sum in (\ref{sum12}) yields
\begin{align}
\nonumber
    \sum_{j,k,\ell\>0}&
    \frac{\xi^j (\rd \xi) \xi^k (\rd \xi)\xi^{\ell}}{(j+k+\ell+2)!}\\
\nonumber
    =&
    \sum_{j,k,\ell\>0}
    \int_{\Lambda}
    \frac{((1-\lambda)\xi)^j (\rd \xi) ((\lambda-\mu) \xi)^k(\rd \xi)(\mu\xi)^{\ell}}{j!k!\ell !}
    \rd \lambda\rd \mu\\
\nonumber
    =&
    \int_{\Lambda}
    \re^{(1-\lambda)\xi}
    (\rd \xi)
    \re^{(\lambda-\mu) \xi}
    (\rd \xi)
    \re^{\mu \xi}
    \rd \lambda\rd \mu\\
\nonumber
    =&
    \re^{\xi}
    \int_{\Lambda}
    \cE_{\lambda}
    (\rd \xi)
    \cE_{\mu}
    (\rd \xi)
    \rd \lambda\rd \mu\\
\label{sum2}
    =&
    \re^{\xi}
    \int_{0}^1
    \cE_{\lambda}
    (\rd \xi)
    \Big(
    \int_{0}^{\lambda}
    \cE_{\mu}
    (\rd \xi)
    \rd \mu
    \Big)
    \rd \lambda,
\end{align}
where the multiple integral employs  the structure of the simplex (\ref{simplex}). This corresponds to the second  derivative of the exponential \cite[Eq. (11.6)]{W_1967}. By substituting (\ref{sum1}) and (\ref{sum2}) into (\ref{sum12}), it follows that
\begin{align}
\nonumber
    \rd \re^{\xi}
     =&
     \re^{\xi}
    \int_0^1
    \cE_{\lambda}(\rd \xi)
    \Big(
        1
        +
        \int_{0}^{\lambda}
        \cE_{\mu}(\rd \xi)
        \rd \mu
    \Big)
    \rd \lambda\\
\label{dexi0}
    =&
     \re^{\xi/2}
    \int_{\!\!-1/2}^{1/2}
    \cE_{\lambda}(\rd \xi)
    \Big(
        1
        +
        \int_{-1/2}^{\lambda}
        \cE_{\mu}(\rd \xi)
        \rd \mu
    \Big)
    \rd \lambda
     \re^{\xi/2}.\!\!
\end{align}
Here, the second equality is obtained by using the identities (\ref{swap1}), (\ref{swap2}) and appropriately translating the limits of integration.
Since the Ito differential $\rd W$ commutes with the adapted processes taken at the same or earlier moments of time, then
\begin{equation}
\label{Edxi}
    \cE_{\lambda}(\rd \xi) = \cE_{\lambda}(f) \rd t + \cE_{\lambda}(g)^{\rT}\rd W,
\end{equation}
where $\cE_{\lambda}(g)$ also commutes with $\rd W$.
By combining (\ref{Edxi}) with the quantum product rules $\rd t \rd W = 0$ and (\ref{Omega}), it follows that
\begin{equation}
\label{cEcE}
\cE_{\lambda}(\rd \xi) \cE_{\mu}(\rd \xi)
    =
    \cE_{\lambda}(g)^{\rT} \Omega\cE_{\mu}(g)    \rd t.
\end{equation}
The representation (\ref{dexi})  is now obtained by substituting (\ref{Edxi}), (\ref{cEcE}) into  (\ref{dexi0}) and using (\ref{alpha}), (\ref{beta}). Finally, the self-adjointness of the operators $\alpha(t)$ and $\beta_1(t), \ldots, \beta_m(t)$ can be  verified directly from (\ref{alpha}), (\ref{beta}) by using the symmetry of the integration limits $\pm 1/2$ about the origin and the identity
\begin{equation}
\label{cEsym}
    (\cE_{\lambda}(\eta))^{\dagger}
    =
    \cE_{-\lambda}(\eta^{\dagger})
\end{equation}
which follows from (\ref{Eee}) and $\xi^{\dagger} = \xi$. Indeed, (\ref{cEsym}) implies that $(\cE_{\lambda}(f))^{\dagger} = \cE_{-\lambda}(f)$ and  $(\cE_{\lambda}(g)^{\rT} \Omega \cE_{\mu}(g))^{\dagger} = \cE_{-\mu}(g)^{\rT} \Omega \cE_{-\lambda}(g)$, since the drift $f$ and  the entries of the dispersion vector $g$ in (\ref{dxi}) are self-adjoint operators and $\Omega = \Omega^*$. Hence,
$$
    \Big(
    \int_{\!\!-1/2}^{1/2}
    \cE_{\lambda}(f)\rd \lambda
    \Big)^{\dagger}
    =
    \int_{\!\!-1/2}^{1/2}
    \cE_{-\lambda}(f)\rd \lambda
    =
    \int_{\!\!-1/2}^{1/2}
    \cE_{\lambda}(f)\rd \lambda,
$$
and
\begin{align*}
    \Big(
    \int_{\Delta}
    \cE_{\lambda}(g)^{\rT} \Omega \cE_{\mu}(g)
    \rd \lambda \rd\mu
    \Big)^{\dagger}
    &=
    \int_{\Delta}
    \cE_{-\mu}(g)^{\rT} \Omega \cE_{-\lambda}(g)
    \rd \lambda \rd\mu    \\
    & =\int_{\Delta}
    \cE_{\lambda}(g)^{\rT} \Omega \cE_{\mu}(g)
    \rd \lambda \rd\mu,
\end{align*}
which proves that both parts of $\alpha$ in (\ref{alpha}) are self-adjoint, and a similar argument applies to the  entries of $\beta$ in (\ref{beta}).
Here, the set $\Delta:= \{(\lambda, \mu)\in \mR^2:\ 1/2\> \lambda \> \mu \> -1/2\}$, associated with (\ref{simplex}), is invariant under the area-preserving linear transformation
$(\lambda, \mu)\mapsto (-\mu,-\lambda)$. \end{proof}

If $\xi$ were a classical diffusion process, then $\rd \xi$ would commute  with $\xi$, thus making $\ad_{\xi}(\rd \xi)$ vanish and implying that $\cE_{\lambda}(\rd \xi) = \rd \xi$. In this commutative case, the representation (\ref{dexi0}) would reduce to $\rd \re^{\xi} = \re^{\xi}(\rd \xi  + (\rd \xi)^2/2)
$, leading to the Doleans-Dade exponential \cite{D_1970}, with the $1/2$-factor coming from  the area of the simplex (\ref{simplex}), and  $(\rd \xi)^2 = g^{\rT} \Omega g\rd t$ being the quadratic variation of $\xi$. Therefore, the relation (\ref{dexi}) can be regarded as a noncommutative quantum counterpart to the classical stochastic exponential.
The following theorem applies Lemma~\ref{lem:EE} to the dynamics of the exponential moment (\ref{Xi}).

\begin{thm}
\label{th:EEE}
The exponential moment (\ref{Xi}) of the quantum process $\xi$ from (\ref{dxi}) satisfies a differential equation
\begin{equation}
\label{Eexidot}
    \dot{\Xi}
     =
     \bE(\re^{\xi/2}\alpha\re^{\xi/2}),
\end{equation}
where the process $\alpha$ is defined by (\ref{alpha}).
\end{thm}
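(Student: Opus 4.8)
The plan is to integrate the quantum Ito differential of Lemma~\ref{lem:EE} and then apply the quantum expectation $\bE$ from (\ref{bE}), the crux being that the stochastic part of $\rd\re^{\xi}$ contributes nothing to the mean over the vacuum state. First I would pass to integral form, $\Xi(t) = \Xi(0) + \bE\int_0^t \rd\re^{\xi}$, substitute the representation (\ref{dexi}), and use the linearity of the trace in (\ref{bE}) to obtain
\begin{equation*}
    \rd\Xi = \bE(\re^{\xi/2}\alpha\re^{\xi/2})\rd t + \bE(\re^{\xi/2}\beta^{\rT}\rd W\re^{\xi/2}).
\end{equation*}
The first term is precisely the right-hand side of the claimed identity (\ref{Eexidot}), so the whole proof reduces to showing that the second, diffusion, term vanishes.

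To handle the diffusion term I would exploit the adaptedness of the processes $\xi$ and $\beta$ entering (\ref{alpha})--(\ref{beta}). Since $\re^{\xi(t)/2}$ is a function of $\xi(t)$, which acts nontrivially only on the past factor $\cH\ox\cF_{[0,t]}$ of the composite space, it commutes with the forward increment $\rd W(t)$, which acts on the future factor $\cF_{[t,\infty)}$; because these increments live on disjoint tensor factors, no Ito correction arises when the rightmost exponential is moved through $\rd W$, giving
\begin{equation*}
    \re^{\xi/2}\beta^{\rT}\rd W\re^{\xi/2} = \re^{\xi/2}\beta^{\rT}\re^{\xi/2}\rd W.
\end{equation*}
This rewrites the second term as the quantum Ito integral, against $\rd W$, of the adapted operator $\re^{\xi/2}\beta^{\rT}\re^{\xi/2}$ (that is, of $\re^{\xi/2}\beta_j\re^{\xi/2}$ against $\rd W_j$, summed over $j$).

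The key step is then the martingale property of such integrals relative to the vacuum state. Because the initial density operator (\ref{rho0}) factorizes over the vacuum $\ups = |0\ket\bra 0|$, and the components of the forward increment $\rd W$ are linear combinations of the annihilation and creation increments, which annihilate the vacuum on the right and on the left respectively, the expectation of any adapted operator multiplied on the right by $\rd W$ is zero. This is the quantum counterpart of the fact that classical Ito integrals are zero-mean martingales, and it yields $\bE(\re^{\xi/2}\beta^{\rT}\re^{\xi/2}\rd W) = 0$. Dividing the surviving relation $\rd\Xi = \bE(\re^{\xi/2}\alpha\re^{\xi/2})\rd t$ by $\rd t$ then gives (\ref{Eexidot}).

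I expect the main obstacle to be analytic rather than algebraic. Since $\xi$ is in general an unbounded self-adjoint operator, $\re^{\xi/2}$ and the products $\re^{\xi/2}\alpha\re^{\xi/2}$ need not be bounded, let alone trace-class, so interchanging $\bE$ with the time integral and with differentiation, and guaranteeing that the stochastic integral is a genuine zero-mean martingale rather than merely a local one, requires integrability and domain hypotheses on $\xi$, $f$, $g$ that ensure $\Xi(t)<\infty$ together with finiteness of the auxiliary moments appearing in (\ref{alpha}). The commutation of adapted operators with forward increments and the vacuum-annihilation argument are standard, but it is in making these rigorous for unbounded operators that the real care is needed.
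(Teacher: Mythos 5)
Your proposal is correct and follows essentially the same route as the paper: both apply the expectation to the representation (\ref{dexi}), use the commutativity of the forward increment $\rd W$ with adapted processes together with the tensor-product structure of the state (\ref{rho0}) over the vacuum to factorize and kill the noise term via $\bE\,\rd W = 0$, and identify $\dot{\Xi}$ with the averaged drift. Your closing remarks on unboundedness and the martingale-versus-local-martingale issue are legitimate analytic caveats that the paper's proof silently glosses over, but they do not constitute a difference in approach.
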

\begin{proof}
The commutativity between $\rd W$ and adapted processes, combined with the product structure of the system-field density operator (\ref{rho0}),    imply that
$$
    \bE(\re^{\xi/2}\beta^{\rT} \rd W\re^{\xi/2}) = \bE(\re^{\xi/2} \beta \re^{\xi/2})^{\rT} \bE \rd W = 0.
$$
Therefore, since the noise term on the right-hand side of (\ref{dexi}) does not contribute to  $\dot{\Xi}$,  the latter reduces to the average of the drift term $\re^{\xi/2} \alpha \re^{\xi/2}$, thus proving (\ref{Eexidot}).  \end{proof}

In the context of the exponential moment dynamics, the quantum process $\alpha$, which linearly enters   the right-hand side of (\ref{Eexidot}) and is computed according to (\ref{alpha}), will be referred to as the \emph{rate process}.

\section{Risk-sensitive moments of system variables}\label{sec:RSM}

As an adapted  quantum process $\xi$, we will now take a quadratic form of the system variables
\begin{equation}
\label{quadxi}
    \xi := X^{\rT} \Pi X/2,
\end{equation}
specified by a matrix $\Pi \in \mS_n$, so that $\xi(t)$ is a self-adjoint operator on $\cH\ox \cF$. Similarly to the usual quadratic forms of commuting variables, the diagonalization of $\Pi$ allows $\xi$ to be represented as a linear combination of squared observables
$$
    \xi
    =
    \frac{1}{2}
    \sum_{k= 1}^n
    \sigma_k (\nu_k^{\rT} X)^2
    \succcurlyeq
    \frac{\sigma_1 }{2}
    \sum_{k=1}^n
    X_k^2,
$$
whose coefficients $\sigma_1 \< \ldots \< \sigma_n$ are the eigenvalues of $\Pi$, with $\nu_1, \ldots, \nu_n$ the corresponding orthonormal eigenvectors in $\mR^n$.  Therefore, if $\Pi\succcurlyeq 0$, that is, $\sigma_1\> 0$, the operator $\xi(t)$ is also positive semi-definite. The exponential moment (\ref{Xi}), associated with (\ref{quadxi}), takes the form
\begin{equation}
\label{RSM}
    \Xi
    :=
    \bE \re^{X^{\rT} \Pi X/2}
\end{equation}
and will be referred to as the \emph{risk-sensitive moment} (RSM) of the system variables. This extends the risk-sensitive storage functions from classical stochastic systems \cite{MP_2009} to the quantum setting. In addition to the time dependence,  the RSM $\Xi$ depends on  $\Pi$, which plays the role of a matrix-valued risk-sensitivity  parameter. Its asymptotic behaviour for small $\Pi$ is described by
\begin{equation}
\label{Tay}
    \Xi
    =
    1
    +
    \bra
        \Pi,
        \Re \bE (XX^{\rT})
    \ket/2 + o(\Pi),
    \qquad
    \Pi \to 0,
\end{equation}
where $\bra K, L\ket:= \Tr (K^*L)$ denotes the Frobenius inner product of real or complex matrices,  and $\bE(XX^{\rT}) = \Re \bE (XX^{\rT}) + i\Theta/2$ is the matrix of second moments of the system variables.
Since $\re^{\xi} \succcurlyeq I + \xi$ (this property is inherited from the exponential function on the real line due to the spectral theorem for self-adjoint operators),  then
$    \Xi \>  \bE (I + \xi) = 1 + \bra \Pi, \Re \bE (XX^{\rT})\ket/2
$, so that the first two terms on the right-hand side of (\ref{Tay}) provide a lower bound for $\Xi$.
By a slightly refined reasoning, if $\Pi\succcurlyeq 0$,
then  $\re^{\xi}\succcurlyeq \xi^r/r!$, which yields an upper bound $\bE (\xi^r)\< r! \Xi$ for the higher-order moments of $\xi$ for all positive integers $r$. Now, to study the dynamics of the RSM $\Xi$ by the methods of Section~\ref{sec:exp_mom}, we will first describe the evolution of the process $\xi$.

\begin{lem}
\label{lem:dxi}
For the open quantum system, specified by Lemma~\ref{lem:dX}, the process $\xi$, defined by (\ref{quadxi}), satisfies the QSDE (\ref{dxi}) whose drift $f$ and dispersion vector $g$ are computed as
\begin{align}
\nonumber
    f
    =&
    \tau +
    (X^{\rT} (A^{\rT} \Pi + \Pi A)X\\
\label{f}
    &+ X^{\rT} \Pi \Theta C Z - Z^{\rT} C^{\rT} \Theta \Pi X
    )/2,\\
\label{tau}
        \tau
    :=& \bra BVB^{\rT}, \Pi\ket/2,\\
\label{g}
    g
    =&
    B^{\rT} \Pi X,
\end{align}
where the matrix $V$ is given by (\ref{V}).
\end{lem}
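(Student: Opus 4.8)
The plan is to apply the quantum Ito product rule directly to the quadratic form $\xi = X^{\rT}\Pi X/2$ from (\ref{quadxi}) and then substitute the QSDE (\ref{dX})--(\ref{FAB}) for $\rd X$, keeping careful track of operator ordering since the entries of $X$ do not commute. Writing $\xi = \frac12\sum_{j,k}\Pi_{jk}X_jX_k$ and using the quantum Ito rule $\rd(X_jX_k) = (\rd X_j)X_k + X_j\rd X_k + (\rd X_j)(\rd X_k)$, I would assemble
\[
    \rd\xi
    =
    \tfrac12\big((\rd X)^{\rT}\Pi X + X^{\rT}\Pi\,\rd X\big)
    + \tfrac12\sum_{j,k}\Pi_{jk}(\rd X_j)(\rd X_k).
\]
Inserting $\rd X = F\rd t + B\rd W$ into the first bracket splits it into a drift part $\frac12(F^{\rT}\Pi X + X^{\rT}\Pi F)\rd t$ and a noise part, while the double sum is the Ito correction.

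For the correction, since $\rd t\,\rd W = 0$ and $\rd t^2 = 0$, only the $\rd W$ components survive, and the quantum Ito table (\ref{Omega}) gives $(\rd X_j)(\rd X_k) = \sum_{a,b}B_{ja}B_{kb}\omega_{ab}\rd t$, so the correction equals $\frac12\sum_{a,b}\omega_{ab}(B^{\rT}\Pi B)_{ab}\rd t$. The key observation is that $B^{\rT}\Pi B$ is symmetric (as $\Pi^{\rT}=\Pi$), so its pairing against the antisymmetric imaginary part $\Im\Omega = J/2$ vanishes, leaving only the real symmetric part $V = \Re\Omega$ from (\ref{V}); the surviving scalar is $\frac12\Tr(BVB^{\rT}\Pi) = \frac12\bra BVB^{\rT},\Pi\ket = \tau$, which is (\ref{tau}). \textbf{This cancellation is the one step deserving care}: it is what makes $\tau$ real and identifies the correction with $V$ rather than with the full (Hermitian) Ito matrix $\Omega$, the CCR contribution $J$ dropping out by symmetry.

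For the noise part, the two pieces give $\frac12(B\rd W)^{\rT}\Pi X + \frac12 X^{\rT}\Pi B\rd W = \frac12\rd W^{\rT}B^{\rT}\Pi X + \frac12 X^{\rT}\Pi B\rd W$. I would then invoke the commutativity of the forward increment $\rd W$ with the adapted operators $X(t)$ (the same adaptedness used earlier around (\ref{Edxi})) to move $\rd W$ to the right in the first piece, obtaining $\rd W^{\rT}B^{\rT}\Pi X = (B^{\rT}\Pi X)^{\rT}\rd W = X^{\rT}\Pi B\rd W$. The two halves combine into $g^{\rT}\rd W = X^{\rT}\Pi B\rd W$, whence $g = B^{\rT}\Pi X$, which is (\ref{g}).

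Finally, for the drift I would substitute $F = AX + \Theta CZ$ together with $F^{\rT} = X^{\rT}A^{\rT} - Z^{\rT}C^{\rT}\Theta$, using $\Theta^{\rT} = -\Theta$ since $\Theta\in\mA_n$, into $\frac12(F^{\rT}\Pi X + X^{\rT}\Pi F)$ and add $\tau$. This yields
\[
    f
    =
    \tau
    + \tfrac12 X^{\rT}(A^{\rT}\Pi + \Pi A)X
    + \tfrac12\big(X^{\rT}\Pi\Theta CZ - Z^{\rT}C^{\rT}\Theta\Pi X\big),
\]
which is exactly (\ref{f}). Apart from the symmetry cancellation in $\tau$, the computation is essentially bookkeeping: the only nonclassical ingredients are the quantum Ito table (\ref{Omega}) and the adaptedness of $\rd W$, and no commutators between the $X_j$ are needed here because they all sit inside c-number matrix contractions.
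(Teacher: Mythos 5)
Your proposal is correct and follows essentially the same route as the paper's proof: apply the quantum Ito product rule to $\xi=X^{\rT}\Pi X/2$, substitute $\rd X=F\rd t+B\rd W$, identify the Ito correction as $\Tr(\Omega B^{\rT}\Pi B)\rd t=\bra BVB^{\rT},\Pi\ket\rd t$ via the cancellation of the antisymmetric part $J$ against the symmetric matrix $B^{\rT}\Pi B$, combine the noise terms into $2X^{\rT}\Pi B\rd W$ using the commutativity of $\rd W$ with adapted processes, and finally substitute $F=AX+\Theta CZ$ with $\Theta^{\rT}=-\Theta$ to obtain (\ref{f})--(\ref{g}). Your write-up is merely more explicit than the paper's (entrywise indices, the $J$-cancellation, and the $\rd W$-commutation step are spelled out), but there is no substantive difference in approach.
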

\begin{proof}
By  applying the quantum Ito formula to (\ref{quadxi}) and using (\ref{Omega}), (\ref{dX}), it follows that
\begin{align}
\nonumber
    2\rd& \xi
    =
    X^{\rT} \Pi \rd X + (\rd X)^{\rT} \Pi X + (\rd X)^{\rT} \Pi \rd X\\
\nonumber
    =&
    X^{\rT} \Pi (F\rd t + B\rd W) + (F\rd t + B\rd W)^{\rT} \Pi X
    + \rd W^{\rT}B^{\rT}\Pi B \rd W\\
\label{dquadxi}
    =&
    (X^{\rT} \Pi F + F^{\rT} \Pi X + \bra BVB^{\rT}, \Pi\ket)\rd t + 2X^{\rT} \Pi B \rd W,
\end{align}
where $\Tr(\Omega B^{\rT} \Pi B) = \bra BVB^{\rT}, \Pi\ket$ by the antisymmetry of the matrix $J$ in (\ref{CCRW}). It now remains to substitute (\ref{FAB}) into (\ref{dquadxi}) to verify that $\xi$ is governed by (\ref{dxi}) with (\ref{f})--(\ref{g}).
\end{proof}

Application of Theorem~\ref{th:EEE} to the RSM (\ref{RSM}) requires computation of the rate process $\alpha$, defined by (\ref{alpha}), for the drift $f$ and the dispersion vector $g$ of the QSDE (\ref{dxi}) which are specified by Lemma \ref{lem:dxi}.

\section{Computation of the rate process}\label{sec:RSM_diss}

The process $g$ in (\ref{g}) is linear with respect to $X$, whereas $f$ in (\ref{f}) is a quadratic function of  the system variables perturbed by the terms which are bilinear in $X$ and the Lur'e type nonlinearities $Z$ from (\ref{Z}). Since the processes $f$ and $g$ enter the rate process $\alpha$ in  (\ref{alpha}) as arguments of the superoperators $\cE_{\lambda}$, we need the following lemma which allows   $\cE_{\lambda}$ to be computed for linear, quadratic and more general nonlinear functions of the system variables by taking advantage of the CCRs (\ref{Theta}).

\begin{lem}
\label{lem:lin}
The action of the superoperator $\cE_{\lambda}$, associated with the system variables by (\ref{Eee}) and (\ref{quadxi}), on the vector $X$ is equivalent to the left multiplication by a matrix exponential:
\begin{equation}
\label{cEX}
    \cE_{\lambda}(X)
    =
    K_{\lambda} X,
    \qquad
    K_{\lambda}
    :=
    \re^{i\lambda \Theta \Pi}
    =
    \overline{K_{-\lambda}},
\end{equation}
where $\Theta$ is the CCR matrix of the system variables from (\ref{Theta}).
\end{lem}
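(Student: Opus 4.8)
The plan is to reduce the similarity transformation defining $\cE_{\lambda}$ to the exponential of a finite-dimensional \emph{scalar} (operator-free) matrix acting on the vector $X$, by exploiting the fact that the adjoint action $\ad_{\xi}$ maps the linear span of $X_1,\ldots,X_n$ into itself. Concretely, from (\ref{Eee}) we have $\cE_{\lambda}(X)=\re^{-\lambda\ad_{\xi}}(X)=\sum_{r\>0}\frac{(-\lambda)^r}{r!}\ad_{\xi}^r(X)$, so it suffices to identify $\ad_{\xi}(X)=[\xi,X]$ as left multiplication of $X$ by a constant matrix and then sum the series.

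First I would compute the single commutator $[\xi,X_j]$ componentwise. Writing $\xi=\frac12\sum_{p,q}\Pi_{pq}X_pX_q$ from (\ref{quadxi}) and applying the Leibniz rule $[X_pX_q,X_j]=X_p[X_q,X_j]+[X_p,X_j]X_q$ together with the CCRs $[X_p,X_q]=i\theta_{pq}$ from (\ref{Theta}), each bracket $[X_pX_q,X_j]$ becomes a real linear combination of the $X_p$. Collecting the two resulting double sums, using the symmetry $\Pi=\Pi^{\rT}$ to merge them (after relabelling the summation indices) and the antisymmetry $\Theta=-\Theta^{\rT}$ when transposing, I expect to arrive at the compact vector identity $\ad_{\xi}(X)=-i\Theta\Pi X$. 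In other words, the adjoint action is precisely left multiplication by the constant matrix $-i\Theta\Pi$.

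Second, since $-i\Theta\Pi$ has scalar (operator-free) entries, it commutes through $\ad_{\xi}$ in the sense that $\ad_{\xi}(NX)=N\,\ad_{\xi}(X)$ for any constant matrix $N$; hence by induction $\ad_{\xi}^r(X)=(-i\Theta\Pi)^r X$ for every $r\>0$. Substituting this into the series above and noting that $(-\lambda)^r(-i\Theta\Pi)^r=(i\lambda\Theta\Pi)^r$, the sum collapses to the matrix exponential $\cE_{\lambda}(X)=\re^{i\lambda\Theta\Pi}X=K_{\lambda}X$, which is the claim (\ref{cEX}). The conjugation identity $K_{\lambda}=\overline{K_{-\lambda}}$ then follows at once from the reality of $\Theta$, $\Pi$ and $\lambda$, since conjugating the series for $\re^{-i\lambda\Theta\Pi}$ merely flips the sign of $i$.

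The main obstacle is the first step: correctly reducing the double sum for $[\xi,X_j]$ to matrix form and, in particular, tracking which of $\Pi$ and $\Theta$ gets transposed, so that the outcome is genuinely left multiplication $K_{\lambda}X$ rather than a right action on $X^{\rT}$. The symmetry of $\Pi$ and the antisymmetry of $\Theta$ conspire so that the product is $-i\Theta\Pi$ (and not $-i\Pi\Theta$ or a transpose thereof); it is the index bookkeeping here that must be verified carefully, as everything downstream is routine once $\ad_{\xi}(X)=-i\Theta\Pi X$ is established.
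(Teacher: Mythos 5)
Your proposal is correct and follows essentially the same route as the paper's own proof: establish $\ad_{\xi}(X)=-i\Theta\Pi X$ from the CCRs and the symmetry/antisymmetry of $\Pi$ and $\Theta$, use $\ad_{\xi}(NX)=N\ad_{\xi}(X)$ with induction to get $\ad_{\xi}^r(X)=(-i\Theta\Pi)^r X$, and sum the exponential series. Your componentwise verification of the key commutator (which the paper leaves implicit) and your explicit check of $K_{\lambda}=\overline{K_{-\lambda}}$ are both sound.
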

\begin{proof}
The superoperator  $\ad_{\xi}$ applies to vectors entrywise   and commutes with the left multiplication by a complex matrix, so that $\ad_{\xi}(NX) = N \ad_{\xi}(X)$ for any  $N \in \mC^{n\x n}$. Hence, by using a straightforward induction, initialized with
\begin{equation}
\label{adxiX}
    \ad_{\xi}(X)
    =
    -i\Theta \Pi X
\end{equation}
in view of (\ref{Theta}) and (\ref{quadxi}), it follows that
$    \ad_{\xi}^r (X) = (-i\Theta \Pi)^r X$ for any nonnegative integer $r$. Substitution of the last relation into  (\ref{Eee}) yields
$
    \cE_{\lambda}(X)
    =
    \sum_{r\> 0}
    (i\lambda \Theta \Pi)^r
    X/r!
    =
    \re^{i\lambda \Theta \Pi} X
$,
which establishes (\ref{cEX}).
\end{proof}

Lemma~\ref{lem:lin} shows that the superoperator $\cE_{\lambda}$, associated  with (\ref{quadxi}), performs a Lie algebraic similarity transformation \cite[Section 6]{W_1967} of the system variables. In particular, this implies that
\begin{equation}
\label{cEpsi}
    \cE_{\lambda}(\psi(u^{\rT} X))
    =
    \psi(u^{\rT} K_{\lambda} X)
\end{equation}
for any vector $u \in \mC^n$ and at least for analytic functions $\psi$ of a complex variable. It suffices to verify (\ref{cEpsi}) for monomials $\psi(z):= z^r$ of arbitrary nonnegative integer degrees $r$. To this end, note that
$$
    \cE_{\lambda}((u^{\rT}X)^r)
    =
    (\cE_{\lambda}(u^{\rT}X))^r
    =
    (u^{\rT}\cE_{\lambda}(X))^r
    =
    (u^{\rT}K_{\lambda}X)^r,
$$
which is obtained by combining the homomorphic property (\ref{homo}) with Lemma~\ref{lem:lin}.
Another identity, which follows from (\ref{adxiX}), is given by
\begin{align}
\nonumber
    \ad_{\xi}(XX^{\rT})
    & =
    \ad_{\xi}(X) X^{\rT} + X \ad_{\xi}(X)^{\rT}\\
\label{adxiXX}
    & =
    i(XX^{\rT}\Pi\Theta - \Theta \Pi XX^{\rT}),
\end{align}
where use is also made of the Leibniz product rule $[\xi,\eta\zeta] = [\xi, \eta]\zeta + \eta [\xi, \zeta]$ for the commutator.
The fact that the right-hand side of (\ref{adxiXX}) is quadratic with respect to $X$ (as is $\xi$ in (\ref{quadxi})),
is closely related to the property  that the commutator of quadratic forms of annihilation and creation operators is also a quadratic form of these operators \cite[Lemma 4]{PUJ_2012}; see also \cite[Appendix B]{JK_1998} for other relevant results. Now, the linearity of $g$ in (\ref{g}) with respect to the system variables allows Lemma~\ref{lem:lin} to be applied as
\begin{equation}
\label{cEg}
    \cE_{\lambda}(g)
    =
    B^{\rT} \Pi
    \cE_{\lambda}(X)
    =
    B^{\rT} \Pi K_{\lambda} X.
\end{equation}
For what follows, we assume that the system dimension $n$ is even,  and
\begin{equation}
\label{good}
    \det \Theta \ne 0,
    \qquad
    \Pi\succ 0.
\end{equation}
Then, in view of
$\d_{\lambda}K_{\lambda} = i\Theta \Pi K_{\lambda}$,
the representation (\ref{cEg}) implies that
\begin{align}
\nonumber
    \int_{-1/2}^{\lambda}
    \cE_{\mu}(g)\rd \mu
    &=
    B^{\rT} \Pi
    \int_{-1/2}^{\lambda}
    K_{\mu}\rd \mu
    X\\
\nonumber
    & =
    B^{\rT} \Pi (i\Theta \Pi)^{-1}
    (K_{\lambda} - K_{-1/2})X\\
\label{intcEmu}
    & =
    iM
    (K_{\lambda} - K_{-1/2})X,
\end{align}
where the matrix $M=-B^{\rT}\Theta^{-1}$ specifies the system-field coupling (\ref{hM}).
It will be convenient to extend the notation $\bra K,L\ket $ to the case of a real or  complex  matrix
$K:=(K_{jk})$ and an appropriately dimensioned  matrix $L:= (L_{jk})$ of operators on a common space by $\bra K,L\ket:=  \sum_{j,k} \overline{K_{jk}}L_{jk}$.  Substitution of (\ref{cEg}) and (\ref{intcEmu}) (which both are linear with respect to $X$) into the Ito correction part of the rate process $\alpha$ in (\ref{alpha}) leads to a quadratic form of the system variables
\begin{equation}
\label{Egg}
    \!\int_{\!\!-1/2}^{1/2}
    \Big(
    \cE_{\lambda}(g)^{\rT}
    \Omega
    \int_{-1/2}^{\lambda}
    \cE_{\mu}(g)
    \rd \mu
    \Big)
    \rd \lambda
    \!=\!
    X^{\rT} \Gamma X
    \!=\!
    \bra
        \overline{\Gamma},
        XX^{\rT}
    \ket,\!\!\!\!\!
\end{equation}
where $\Gamma$ is a complex Hermitian matrix of order $n$, which is computed as
\begin{equation}
\label{Gamma}
 \Gamma
 :=i
    \int_{\!\!-1/2}^{1/2}
    K_{\lambda}^{\rT}
    \Pi B
    \Omega
    M(K_{\lambda}-K_{-1/2})
    \rd \lambda.
\end{equation}
In view of (\ref{good}), the matrix  $\Theta \Pi = \Pi^{-1/2}\sqrt{\Pi}\Theta \sqrt{\Pi} \sqrt{\Pi}$ is obtained by a similarity transformation from $\sqrt{\Pi}\Theta \sqrt{\Pi} \in \mA_n$, with
$\sqrt{\Pi}$ a real positive definite symmetric matrix square root of $\Pi$.
Hence, $\Theta \Pi$ is a diagonalizable matrix  whose eigenvalues are pure imaginary and symmetric about the origin  \cite{HJ_2007}, that is, representable as $\pm i \omega_k$,  where $\omega_1, \ldots, \omega_{n/2}$ are all real. In this case, the matrix exponential $K_{\lambda}$ in (\ref{cEX}) is isospectral to a complex positive definite Hermitian matrix  $\re^{i\lambda \sqrt{\Pi}\Theta \sqrt{\Pi}}$
with eigenvalues $\re^{\pm\lambda \omega_k}> 0$ (recall that the parameter $\lambda$ is real). Thus,
\begin{equation}
\label{Keig}
    K_{\lambda} = \Psi D_{\lambda}  \Psi^{-1},
\end{equation}
where the columns of the matrix $\Psi \in \mC^{n\x n}$ are the eigenvectors of $\Theta \Pi$, and
\begin{equation}
\label{D}
    D_{\lambda}:= \diag_{1\< j \< n}(d_j(\lambda)) = \diag_{1\< k \< n/2}(\re^{\pm\lambda \omega_k})
\end{equation}
is a diagonal matrix formed by the eigenvalues of $K_{\lambda}$. We will now consider a linear operator $\cK$, which acts on an $(n\x n)$-matrix $P$ as
\begin{equation}
\label{cK}
    \cK(P)
    :=
    \int_{\!\!-1/2}^{1/2}
    K_{\lambda} P K_{\lambda}^{\rT} \rd \lambda
    =
    \Psi(S\od (\Psi^{-1} P\Psi^{-\rT}))\Psi^{\rT},
\end{equation}
where use is made of (\ref{Keig}). Here, $\od$ denotes the Hadamard product of matrices, and  $\Psi^{-\rT} := (\Psi^{-1})^{\rT}$. Also, the entries of the matrix $S:= (s_{jk})_{1\< j,k\< n}  \in \mS_n$ are computed in terms of (\ref{D}) as
\begin{equation}
\label{S}
    s_{jk} :=
    \int_{\!\!-1/2}^{1/2}
    d_j(\lambda)d_k(\lambda)\rd \lambda.
\end{equation}
and are all strictly positive. Therefore,  the operator $\cK$ in (\ref{cK}) is invertible, and its inverse has a similar representation
\begin{equation}
\label{cKinv}
    \cK^{-1}(P)
    =
    \Psi(S^{\om}\od (\Psi^{-1} P\Psi^{-\rT}))\Psi^{\rT},
\end{equation}
where $S^{\om}:= (1/s_{jk})_{1\< j,k\< n}$ denotes the entrywise inverse of the matrix $S$. Also, the subspace of complex Hermitian matrices of order $n$ is invariant under $\cK$. The significance of the operator $\cK$ is clarified by the relation
\begin{align}
\nonumber
    \int_{\!\!-1/2}^{1/2}
    \cE_{\lambda}(XX^{\rT})\rd \lambda
    & =
    \int_{\!\!-1/2}^{1/2}
    \cE_{\lambda}(X)\cE_{\lambda}(X)^{\rT}\rd \lambda\\
\label{cKXX}
    & =
    \int_{\!\!-1/2}^{1/2}
    K_{\lambda}X X^{\rT}K_{\lambda}^{\rT}\rd \lambda
    =
    \cK(XX^{\rT}),
\end{align}
where use is made of the homomorphic property (\ref{homo}) and Lemma~\ref{lem:lin}. In the same vein, we define  an $(s\x n)$-matrix of operators
\begin{equation}
\label{Q}
    Q
    :=
    \int_{\!\!-1/2}^{1/2} \cE_{\lambda}(ZX^{\rT})\rd \lambda
    =
    \int_{\!\!-1/2}^{1/2} \cE_{\lambda}(Z)X^{\rT} K_{\lambda}^{\rT}\rd \lambda,
\end{equation}
where the rightmost equality employs the corollary (\ref{cEpsi}) of Lemma~\ref{lem:lin} applied to (\ref{Z}) as
\begin{equation}
\label{cEZ}
    \cE_{\lambda}(Z)
    =
    (\varphi_k' (c_k^{\rT} K_{\lambda}X))_{1\< k \< s}.
\end{equation}
Now, by substituting (\ref{f}), (\ref{Egg}) into (\ref{alpha}), and using (\ref{cKXX}) and (\ref{Q}), it follows that the rate process for the RSM (\ref{RSM})  takes the form
\begin{align}
\nonumber
    \alpha
    = &
    \tau +
    \bra
        A^{\rT} \Pi + \Pi A,
        \cK(XX^{\rT})
    \ket/2
     +
    \bra
        \overline{\Gamma}, XX^{\rT}
    \ket +\bra \Pi, T\ket/2\\
\nonumber
    =&
    \tau +
    \bra
        A^{\rT} \Pi + \Pi A + 2 \cK^{-\dagger}(\overline{\Gamma}),
        \cK(XX^{\rT})
    \ket/2\\
\label{RSMalpha}
    & + \bra \Pi, T\ket/2.
%
\end{align}
Here, $T$ is an $(n\x n)$-matrix of operators associated with (\ref{Q}) by
\begin{equation}
\label{T}
    T :=
    \Theta C Q - Q^{\dagger} C^{\rT} \Theta.
\end{equation}
Also, $\cK^{-\dagger}:= (\cK^{-1})^{\dagger}$ denotes the adjoint of the inverse operator $\cK^{-1}$ from (\ref{cKinv}) in the Hilbert space $\mC^{n\x n}$ with the Frobenius inner product of matrices, which is computed as
\begin{equation}
\label{cKinvadj}
    \cK^{-\dagger}(P)
    =
    \Psi^{-*}(S^{\om}\od (\Psi^* P\overline{\Psi}))\overline{\Psi^{-1}}.
\end{equation}

\section{Dynamics of risk-sensitive moments}\label{sec:RSM_dyn}

Substitution of the rate process (\ref{RSMalpha}) into (\ref{Eexidot}) of Theorem~\ref{th:EEE} yields the following equation for the RSM of the system
\begin{align}
\nonumber
    \d_t \Xi
    =&
    \tau \Xi +
    \bra
        A^{\rT} \Pi + \Pi A + 2 \cK^{-\dagger}(\overline{\Gamma}),
        N)
    \ket/2\\
\label{Xidot}
    & +
    \Bra
        \Pi,
        \bE(\re^{\xi/2}T\re^{\xi/2})
    \Ket/2.
\end{align}
Here, $N$ is a complex Hermitian matrix of order $n$ defined by
%
%
\begin{align}
\nonumber
    N
    := &
    \bE
    \Big(
        \re^{\xi/2}
        \int_{\!\!-1/2}^{1/2}
        \cE_{\lambda}(XX^{\rT})
        \rd \lambda
        \re^{\xi/2}
    \Big)\\
\label{N}
    =&
    \cK(\bE(\re^{\xi/2}XX^{\rT}\re^{\xi/2})).
\end{align}
The following lemma directly relates the matrix $N$ with the RSM.

\begin{lem}
\label{lem:N}
The matrix $N$, defined by (\ref{N}), is representable in terms of the RSM $\Xi$ from (\ref{RSM}) by a linear differential operator
\begin{equation}
\label{Nfull}
    N = 2\d_{\Pi} \Xi + i\Xi\Theta /2,
\end{equation}
where $\Theta$ is the CCR matrix of the system variables from (\ref{Theta}).
\end{lem}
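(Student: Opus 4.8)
The plan is to read $N$ off from its definition (\ref{N}) as a single $\lambda$-integral of sandwiched exponentials, then split it into its symmetric and antisymmetric (equivalently, real and imaginary) parts and match these against $2\d_\Pi\Xi$ and $i\Xi\Theta/2$, respectively. First I would absorb the factors $\re^{\xi/2}$ into the superoperator: by (\ref{Eee}), $\re^{\xi/2}\cE_\lambda(XX^{\rT})\re^{\xi/2} = \re^{(1/2-\lambda)\xi}XX^{\rT}\re^{(1/2+\lambda)\xi}$, so the substitution $\mu:=\lambda+1/2$ turns (\ref{N}) into
\begin{equation*}
    N = \int_0^1 \bE\big(\re^{(1-\mu)\xi}XX^{\rT}\re^{\mu\xi}\big)\rd\mu .
\end{equation*}
Conjugating entrywise (using $\overline{\bE\eta}=\bE(\eta^{\dagger})$ for the self-adjoint $\rho(0)$, together with $\mu\mapsto 1-\mu$) shows $N^*=N$, so $N$ is Hermitian; hence its real part is symmetric and its imaginary part antisymmetric, with $\Re N=(N+N^{\rT})/2$ and $i\Im N=(N-N^{\rT})/2$. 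It then suffices to compute these two parts separately.

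For the antisymmetric part I would invoke the CCRs (\ref{Theta}). Entrywise, $N_{jk}-N_{kj}=\int_0^1\bE(\re^{(1-\mu)\xi}[X_j,X_k]\re^{\mu\xi})\rd\mu$, and since $[X_j,X_k]=i\theta_{jk}$ is a scalar multiple of the identity, it factors out of the sandwich, leaving $\re^{(1-\mu)\xi}\re^{\mu\xi}=\re^{\xi}$. Integrating in $\mu$ and taking $\bE$ gives $N_{jk}-N_{kj}=i\theta_{jk}\Xi$, that is, $(N-N^{\rT})/2 = i\Xi\Theta/2$, which is the imaginary part of the asserted identity.

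For the symmetric part I would identify $(N+N^{\rT})/2=\Re N$ with $2\d_\Pi\Xi$ by differentiating $\Xi$ in $\Pi$. A symmetric perturbation $\Pi\mapsto\Pi+\delta\Pi$ induces $\delta\xi=\tfrac12 X^{\rT}(\delta\Pi)X$, and the Gateaux/Duhamel formula already established in the proof of Lemma~\ref{lem:EE} (see (\ref{sum1})) gives $\delta\re^{\xi}=\int_0^1\re^{(1-\mu)\xi}(\delta\xi)\re^{\mu\xi}\rd\mu$. Applying $\bE$ and expanding $X^{\rT}(\delta\Pi)X=\sum_{jk}(\delta\Pi)_{jk}X_jX_k$ yields $\delta\Xi=\tfrac12\sum_{jk}(\delta\Pi)_{jk}N_{jk}=\tfrac12\bra\Re N,\delta\Pi\ket$, the imaginary part of $N$ dropping out because $\delta\Pi$ is symmetric. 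Reading off the Frobenius gradient over $\mS_n$ then gives $\d_\Pi\Xi=\tfrac12\Re N$, i.e. $\Re N=2\d_\Pi\Xi$. Adding the two parts produces $N=\Re N+i\Im N=2\d_\Pi\Xi+i\Xi\Theta/2$; the consistency of the normalization can be cross-checked against the small-$\Pi$ expansion (\ref{Tay}), where $N=\bE(XX^{\rT})$ at $\Pi=0$.

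The main obstacle is the bookkeeping around the matrix derivative $\d_\Pi$: because $\Pi$ is constrained to lie in $\mS_n$, only the symmetric part of $N$ is visible to $\d_\Pi\Xi$, and one must fix the convention (the Frobenius gradient over symmetric matrices, consistent with (\ref{Tay})) so that exactly the factor $2$ appears; a naive entrywise derivative over unconstrained matrices would instead absorb the $i\Xi\Theta/2$ term through the noncommutativity of $X_jX_k$ and give a superficially different formula. A secondary technical point is justifying the differentiation under $\bE=\Tr(\rho(0)\,\cdot\,)$ and its interchange with the $\mu$-integral for the unbounded operator $\xi$, which I would handle under the same regularity and convergence assumptions implicit in the use of $\re^{\xi}=\sum_{r\>0}\xi^r/r!$ throughout.
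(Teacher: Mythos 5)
Your proposal is correct and follows essentially the same route as the paper: both split $N$ into its symmetric/real and antisymmetric/imaginary parts, dispose of the latter via the CCRs (the antisymmetric part $i\Theta/2$ of $XX^{\rT}$ being a scalar matrix that passes through the exponentials to give $\Xi\Theta/2$), and identify the former with $2\d_{\Pi}\Xi$ through the Gateaux/Duhamel derivative of $\re^{\xi}$ in the direction $\delta\xi=\frac12 X^{\rT}(\delta\Pi)X$. Your only departures are cosmetic and, if anything, make the argument more self-contained: you work in the one-sided $[0,1]$ Duhamel parametrization rather than the paper's symmetric $\re^{\xi/2}(\cdot)\re^{\xi/2}$ sandwich (the two are related by the paper's own identities (\ref{swap1})--(\ref{swap2})), and you spell out the Hermiticity of $N$ and the $\mS_n$-gradient convention that the paper leaves implicit in its statement $\d_{\Pi}\xi=(XX^{\rT}+(XX^{\rT})^{\rT})/4$.
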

\begin{proof}
Upon splitting $XX^{\rT}$ into the symmetric and antisymmetric parts as
\begin{equation}
\label{XX}
    XX^{\rT} = (XX^{\rT} + (XX^{\rT})^{\rT})/2 + i\Theta/2,
\end{equation}
the imaginary part of the matrix $N$ in (\ref{N}) can be expressed in terms of (\ref{RSM}) as
\begin{equation}
\label{ImN}
    \Im N
    =
    \bE
    \Big(
        \re^{\xi/2}
        \int_{\!\!-1/2}^{1/2}
        \cE_{\lambda}(\Theta/2)
        \rd \lambda
        \re^{\xi/2}
    \Big)
    =
    \Xi \Theta /2,
\end{equation}
where use is made of the identity $\cE_{\lambda}(\Theta) = \Theta$ (recall that constants remain unchanged under the action of $\cE_{\lambda}$). Now, the process $\xi$ in (\ref{quadxi}) depends linearly on the matrix $\Pi\in \mS_n$, and
$$
    \d_{\Pi} \xi
    =
    (XX^{\rT} + (XX^{\rT})^{\rT})/4
$$
coincides with the symmetric part of $XX^{\rT}$ in (\ref{XX}) up to a factor of 2.
Hence,  a reasoning, similar to the proofs of Lemma~\ref{lem:EE} and Theorem~\ref{th:EEE} (except that no Ito correction terms arise in this case),  allows the real part of $N$ in (\ref{N}) to be related to the Frechet derivative of the RSM $\Xi$ with respect to  $\Pi$ as
\begin{align}
\nonumber
    \Re N
    & =
    \bE
    \Big(
        \re^{\xi/2}
        \int_{\!\!-1/2}^{1/2}
        \cE_{\lambda}((XX^{\rT} + (XX^{\rT})^{\rT})/2)
        \rd \lambda
        \re^{\xi/2}
    \Big)\\
\label{ReN}
    & =
    2
    \bE
    \Big(
        \re^{\xi/2}
        \int_{\!\!-1/2}^{1/2}
        \cE_{\lambda}(\d_{\Pi}\xi)
        \rd \lambda
        \re^{\xi/2}
    \Big)
    =
    2\d_{\Pi} \Xi.
\end{align}
The representation (\ref{Nfull}) now follows from (\ref{ImN}) and (\ref{ReN}).
\end{proof}

Lemma~\ref{lem:N} allows the differential equation (\ref{Xidot}) to be written in a more closed form.

\begin{thm}
\label{th:RSM}
Under the conditions (\ref{good}), the RSM (\ref{RSM})  of the quantum system, described by Lemma~\ref{lem:dX}, satisfies a partial differential equation (PDE)
\begin{align}
\nonumber
    \d_t \Xi
    =&
    \bra
        A^{\rT}\Pi + \Pi A + 2\Ups, \d_{\Pi}\Xi
    \ket
    +
    (\tau + \bra \mho, \Theta\ket/2) \Xi\\
\label{Xidot1}
    & +
    \bra
        \Pi, \bE(\re^{\xi/2} T \re^{\xi/2})
    \ket/2.
\end{align}
Here,
\begin{equation}
\label{Upsmho}
    \Ups := \Re \cK^{-\dagger}(\overline{\Gamma}),
    \qquad
    \mho := \Im \cK^{-\dagger}(\overline{\Gamma})
\end{equation}
are computed using (\ref{Gamma}), (\ref{cKinvadj}), and $\tau$ and $T$ are defined by (\ref{tau}), (\ref{T}).
\end{thm}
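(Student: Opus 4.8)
The plan is to obtain (\ref{Xidot1}) directly from the differential equation (\ref{Xidot}) by substituting the representation of the matrix $N$ furnished by Lemma~\ref{lem:N} and then collecting terms via the elementary orthogonality, under the trace pairing, between symmetric and antisymmetric matrices. No further analytic input is needed: equation (\ref{Xidot}) and Lemma~\ref{lem:N} have already absorbed all of the quantum Ito calculus, so what remains is purely algebraic bookkeeping.

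First I would record the symmetry types of the matrices entering the middle pairing of (\ref{Xidot}). Since $A$ is real and $\Pi\in\mS_n$, the matrix $A^{\rT}\Pi+\Pi A$ is real symmetric; and since $\cK^{-\dagger}(\overline{\Gamma})$ is Hermitian (which follows from the formula (\ref{cKinvadj}) together with the fact that $\overline{\Gamma}$ is Hermitian), the decomposition (\ref{Upsmho}) writes it as $\Ups+i\mho$ with $\Ups$ real symmetric and $\mho$ real antisymmetric. Hence the first argument of the pairing splits as
\begin{equation*}
    A^{\rT}\Pi+\Pi A+2\cK^{-\dagger}(\overline{\Gamma})
    =
    (A^{\rT}\Pi+\Pi A+2\Ups)+2i\mho,
\end{equation*}
a real symmetric part plus $i$ times a real antisymmetric part. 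By Lemma~\ref{lem:N}, the second argument $N=2\d_{\Pi}\Xi+i\Xi\Theta/2$ likewise has real symmetric part $2\d_{\Pi}\Xi$ (as $\Xi$ is a real scalar function of $\Pi$, its gradient is symmetric) and $i$ times the real antisymmetric matrix $\Xi\Theta/2$.

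The core step is to evaluate $\tfrac12\bra A^{\rT}\Pi+\Pi A+2\cK^{-\dagger}(\overline{\Gamma}),N\ket=\tfrac12\Tr(\,\cdot\,N)$ (the first argument being Hermitian, so that the conjugation in $\bra K,L\ket=\Tr(K^*L)$ drops) by expanding the product and discarding the cross terms. Because the trace of a product of a symmetric and an antisymmetric matrix vanishes, the pairing of the symmetric part $A^{\rT}\Pi+\Pi A+2\Ups$ against $i\Xi\Theta/2$ drops out, as does the pairing of $2i\mho$ against $2\d_{\Pi}\Xi$. The two surviving contributions are the symmetric--symmetric term $\bra A^{\rT}\Pi+\Pi A+2\Ups,\d_{\Pi}\Xi\ket$ and the antisymmetric--antisymmetric term, which, after using $\Tr(\mho\Theta)=-\bra\mho,\Theta\ket$ (again from $\mho^{\rT}=-\mho$) and tracking the factors $i\cdot i=-1$ together with the halves from $N$ and from the pairing, equals $\bra\mho,\Theta\ket\Xi/2$.

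Substituting this back into (\ref{Xidot}) and merging the resulting $\bra\mho,\Theta\ket\Xi/2$ with the existing $\tau\Xi$ term produces the coefficient $(\tau+\bra\mho,\Theta\ket/2)$ of $\Xi$, while the term $\tfrac12\bra\Pi,\bE(\re^{\xi/2}T\re^{\xi/2})\ket$ is carried over unchanged; this is precisely (\ref{Xidot1}). I expect no genuine obstacle: the only place demanding care is the bookkeeping of the complex conjugation in the Frobenius inner product and the several factors of $1/2$ and $i$, so that the symmetric and antisymmetric pieces land with the correct signs. The hypotheses (\ref{good}) enter only indirectly, through the diagonalizability of $\Theta\Pi$ that underlies the constructions of $\cK$, $\cK^{-\dagger}$ and $\Gamma$, and hence of $\Ups$ and $\mho$.
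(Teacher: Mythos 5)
Your proposal is correct and follows exactly the route the paper intends: substituting the representation $N = 2\d_{\Pi}\Xi + i\Xi\Theta/2$ of Lemma~\ref{lem:N} into (\ref{Xidot}) and eliminating the cross terms via the orthogonality of $\mS_n$ and $\mA_n$ under the Frobenius pairing, with your sign and factor bookkeeping (in particular $\Tr(\mho\Theta) = -\bra\mho,\Theta\ket$ and the $i\cdot i=-1$ cancellation yielding $\bra\mho,\Theta\ket\Xi/2$) all checking out. The paper in fact leaves this substitution implicit, stating only that Lemma~\ref{lem:N} allows (\ref{Xidot}) to be written in closed form, so your write-up supplies precisely the omitted algebra.
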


Theorem~\ref{th:RSM}, which is the main result of the paper, shows that the unperturbed part of the RSM dynamics (the first line of (\ref{Xidot1}))  corresponds to a linear PDE of first order which can, in principle,  be solved by the method of characteristics \cite{E_1998} using the solutions of the ordinary differential equations (ODEs)
\begin{equation}
\label{Pidot}
    \dot{\Pi}
    +
    A^{\rT}\Pi + \Pi A + 2\Ups=0.
\end{equation}
This requires computation of the matrix-valued functions $\Ups$, $\mho$  of the risk-sensitivity parameter $\Pi\succ 0$ in (\ref{Upsmho}) and will be discussed elsewhere. The influence of the Lur'e type perturbations in the system Hamiltonian is described by the second line of (\ref{Xidot1}) and will be considered in the next section.


\section{Perturbations bounded using superpositive ordering}\label{sec:super}

Let $L := (L_{jk})_{1\< j,k\< r}$ be a matrix whose entries are linear operators on the system-field Hilbert space $\cH \ox \cF$, satisfying $L^{\dagger} := (L_{kj}^{\dagger})_{1\< j,k\< r} = L$, where, in application to matrices of operators, $(\cdot)^{\dagger} := ((\cdot)^{\#})^{\rT}$ denotes the transpose of the entrywise adjoint  $(\cdot)^{\#}$. Then for any $r$-dimensional complex vector $u:= (u_j)_{1\< j\< r}$, the operator $u^* L u := \sum_{j,k=1}^r \overline{u_j} L_{jk} u_k$ is self-adjoint. We say that such a matrix $L$  is \emph{superpositive} if $u^*L u\succcurlyeq 0$ for any $u \in \mC^r$. Since the superpositiveness extends the standard positive semi-definiteness (from a single operator to a matrix of operators), it will be written using the same symbol as $L \succcurlyeq 0$. The superpositiveness of $L$ is inherited by the matrix $\eta L \eta^{\dagger}$ for any linear operator $\eta$ on $\cH\ox \cF$. Also, $L \succcurlyeq  0$ implies that $\bE L\succcurlyeq 0$ holds regardless of the reference density operator over which the expectation is computed. Similarly to the usual positive semi-definiteness, the superpositiveness induces a partial  ordering among matrices with operator-valued entries. A straightforward example of a superpositive matrix is $XX^{\rT}$. We will now use the concept of superpositiveness to specify a class of Lur'e type perturbations as those which satisfy
\begin{equation}
\label{class}
    T
    \preccurlyeq
    \sigma
    \cK(XX^{\rT}),
\end{equation}
where $\sigma>0$ is a scalar parameter, and use is made of (\ref{cKXX}), (\ref{T}). The matrix $T$ in (\ref{T})
depends on the risk-sensitivity parameter $\Pi\succ 0$ through the matrix $Q$ in (\ref{Q}) since the matrix exponential $K_{\lambda}$ in (\ref{cEX}) involves $\Pi$.  The effect of this dependence on the description (\ref{class}) of the class of perturbations requires additional analysis  which will be carried out elsewhere. To this end, we will only mention here that in a particular case of quadratic perturbations $\varphi_1, \ldots, \varphi_s$ given by (\ref{quadphi}), the matrix $Q$ becomes linearly related to the matrix on the right-hand of (\ref{class}) as
\begin{equation}
\label{Qspec}
    Q = \diag_{1\< k\< s}(\gamma_k) C^{\rT} \cK(XX^{\rT}).
\end{equation}
Substitution of (\ref{Qspec}) into (\ref{T}) shows that the corresponding  class (\ref{class})  of (quadratic) perturbations of the Hamiltonian can be described in terms of a Lyapunov operator $P\mapsto LP+ P L^{\rT}$, where the matrix $L:= \Theta C\diag_{1\< k\< s}(\gamma_k) C^{\rT}$ does not depend on $\Pi$. We will now discuss the role of the inequality (\ref{class}), which employs the concept of superpositiveness, in the dynamics of the RSM (\ref{RSM}) for general (not necessarily quadratic)  Lur'e type perturbations.

\begin{thm}
\label{th:T}
Under the conditions of Theorem~\ref{th:RSM}, and for the Lur'e type perturbations described by (\ref{class}), the RSM of the quantum system satisfies a partial differential inequality (PDI)
\begin{align}
\nonumber
    \d_t \Xi
    \< &
    \bra
        A^{\rT}\Pi + \Pi A + 2\Ups + \sigma \Pi, \d_{\Pi}\Xi
    \ket\\
\label{Xidot2}
    & +
    (\tau + \bra \mho, \Theta\ket/2) \Xi.
\end{align}
\end{thm}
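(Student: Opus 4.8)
The plan is to take the exact identity (\ref{Xidot1}) of Theorem~\ref{th:RSM} as the starting point and to replace its last term, $\bra \Pi, \bE(\re^{\xi/2} T \re^{\xi/2})\ket/2$, by an upper bound of the same type as the first-order term on its first line; the inequality (\ref{Xidot2}) will then be immediate. All the effect of the Lur'e type nonlinearities is concentrated in this single term through the matrix $T$ from (\ref{T}), so no other part of (\ref{Xidot1}) needs to be touched.

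First I would rephrase the hypothesis (\ref{class}) as the superpositiveness $\sigma\cK(XX^{\rT}) - T \succcurlyeq 0$ and propagate it through the two operations that build the perturbation term. Since $\xi$ in (\ref{quadxi}) is self-adjoint, the factor $\re^{\xi/2}$ equals its own adjoint, so the congruence $L\mapsto \re^{\xi/2} L\re^{\xi/2}$ preserves superpositiveness, as recalled in Section~\ref{sec:super}; the quantum expectation then turns superpositiveness into ordinary Hermitian positive semi-definiteness, again by Section~\ref{sec:super}. Applying both steps gives
\[
    \sigma\, \bE(\re^{\xi/2}\cK(XX^{\rT})\re^{\xi/2}) - \bE(\re^{\xi/2} T\re^{\xi/2}) \succcurlyeq 0.
\]
The decisive identification is that the first expectation is precisely the Hermitian matrix $N$ of (\ref{N}): by (\ref{cKXX}) the integral $\int_{-1/2}^{1/2}\cE_{\lambda}(XX^{\rT})\rd\lambda$ is $\cK(XX^{\rT})$, and pulling the scalar matrices $K_{\lambda}$ out of the expectation is exactly the second equality in (\ref{N}). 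Hence $\bE(\re^{\xi/2} T\re^{\xi/2}) \preccurlyeq \sigma N$.

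Next I would contract this matrix inequality against the risk-sensitivity parameter $\Pi\succ 0$. Because the trace of a product of two positive semi-definite Hermitian matrices is nonnegative, testing against $\Pi\succcurlyeq 0$ preserves the scalar ordering, so $\bra \Pi, \bE(\re^{\xi/2} T\re^{\xi/2})\ket \< \sigma\bra \Pi, N\ket$. To recast the right-hand side in terms of the derivative of the RSM, I would invoke Lemma~\ref{lem:N} to write $N = 2\d_{\Pi}\Xi + i\Xi\Theta/2$ and then note that $\bra \Pi, i\Xi\Theta/2\ket = (i\Xi/2)\Tr(\Pi\Theta) = 0$, since $\Pi\in\mS_n$ is symmetric while $\Theta\in\mA_n$ is antisymmetric and $\Xi$ is a scalar. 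This leaves $\bra \Pi, N\ket = 2\bra \Pi, \d_{\Pi}\Xi\ket$, so the perturbation term is bounded as $\bra \Pi, \bE(\re^{\xi/2} T\re^{\xi/2})\ket/2 \< \sigma\bra \Pi, \d_{\Pi}\Xi\ket$.

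Finally, substituting this bound into (\ref{Xidot1}) and merging it with the existing first-order term by the (conjugate-)bilinearity of the Frobenius inner product, $\bra A^{\rT}\Pi + \Pi A + 2\Ups, \d_{\Pi}\Xi\ket + \sigma\bra \Pi, \d_{\Pi}\Xi\ket = \bra A^{\rT}\Pi + \Pi A + 2\Ups + \sigma\Pi, \d_{\Pi}\Xi\ket$, produces exactly (\ref{Xidot2}). The argument is a chain of order-preservation steps rather than a computation, so I do not expect a serious obstacle; the only place that genuinely requires care is keeping straight the three distinct notions of positivity in play --- the superpositive ordering of operator matrices, its image under $\bE$ as Hermitian positive semi-definiteness, and the sign of the scalar pairing against $\Pi$ --- together with the substantive observation that the sandwiched and averaged $\cK(XX^{\rT})$ is the very matrix $N$ that Lemma~\ref{lem:N} links to $\d_{\Pi}\Xi$, which is what closes the estimate into the stated PDI.
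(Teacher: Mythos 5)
Your proposal is correct and follows essentially the same route as the paper's own proof: bound $\bE(\re^{\xi/2}T\re^{\xi/2})$ by $\sigma N$ via the superpositive ordering (\ref{class}) and its preservation under conjugation and expectation, identify $N$ with $2\d_{\Pi}\Xi + i\Xi\Theta/2$ through Lemma~\ref{lem:N}, kill the $\Theta$ term by the orthogonality of $\mS_n$ and $\mA_n$ when pairing with $\Pi\succ 0$, and substitute into (\ref{Xidot1}). The only difference is that you spell out the intermediate order-preservation steps in more detail than the paper does, which is fine.
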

\begin{proof}
From (\ref{class}) and the properties of the superpositiveness,  it follows that
\begin{align}
\nonumber
    \bE(\re^{\xi/2} T \re^{\xi/2})
    &\preccurlyeq
    \sigma
    \bE(\re^{\xi/2} \cK(XX^{\rT}) \re^{\xi/2})\\
\label{pos1}
    &
    =\sigma N
    =
    \sigma(2\d_{\Pi} \Xi + i\Xi\Theta /2),
\end{align}
where use is made of Lemma~\ref{lem:N}. Since $\Pi\succ 0$, then (\ref{pos1}) leads to an upper bound for the perturbation term on the right-hand side of (\ref{Xidot1}):
\begin{align}
\nonumber
    \bra
        \Pi, \bE(\re^{\xi/2} T \re^{\xi/2})
    \ket/2
    & \<
    \sigma
    \bra
        \Pi,
        2\d_{\Pi} \Xi + i\Xi\Theta /2
    \ket/2\\
\label{pos2}
    & =
    \sigma
    \bra
        \Pi,
        \d_{\Pi} \Xi
    \ket,
\end{align}
where $\bra \Pi, \Theta\ket = 0$ since $\Pi$, $\Theta$ belong to the orthogonal subspaces $\mS_n$, $\mA_n$. The PDI (\ref{Xidot2}) is now obtained by combining (\ref{Xidot1}) with (\ref{pos2}).
\end{proof}

The PDI (\ref{Xidot2}) can be treated as a multivariate partial differential case of the Gronwall-Bellman lemma.  In fact, it is reduced to the standard univariate version of the lemma by noting that
$
    \d_t \Xi
    -
    \bra
        A^{\rT}\Pi + \Pi A + 2\Ups + \sigma \Pi, \d_{\Pi}\Xi
    \ket
$ coincides with the total time derivative $\rd \Xi /\rd t$ of the RSM $\Xi$ (as a function of $t$ and $\Pi$) along the trajectories of another characteristic  ODE
\begin{equation}
\label{Pidot1}
    \dot{\Pi}
    +
    A^{\rT}\Pi + \Pi A + 2\Ups +\sigma \Pi=0,
\end{equation}
which  is similar to (\ref{Pidot}). Indeed, along the characteristics (\ref{Pidot1}), the PDI (\ref{Xidot2}) is equivalent to an ordinary differential inequality
$$
    \rd \Xi /\rd t \< (\tau + \bra \mho, \Theta\ket/2) \Xi,
$$
thus making the univariate Gronwall-Bellman lemma applicable to this case.

\end{document}